\newcommand{\dickm}[1]{\text{\boldmath ${#1}$}}
\theoremstyle{plain}
\newtheorem{thm}{Theorem}[section]
\newtheorem{satz}[thm]{Proposition}  
\newtheorem{defi}[thm]{Definition}
\newtheorem{bem}[thm]{Remark}
\newtheorem{rems}[thm]{Remarks}
\begin{document}

\title{\LARGE The Effect of Recurrent Mutation on the Linkage Disequilibrium under a Selective Sweep}

\thispagestyle{empty}

\author{C. Borck\thanks{Albert-Ludwigs-University, Freiburg, Eckerstra{\ss}e 1, 79104 Freiburg, Deutschland, e-mail: Cornelia.Borck@stochastik.uni-freiburg.de}}

\maketitle

\begin{abstract}
A selective sweep describes the reduction of diversity due to strong positive selection.
If the mutation rate to a selectively beneficial allele is sufficiently high, \citet{PenningsHermisson2006a} have shown, that it becomes likely, that a
 selective sweep is caused by several individuals. Such an event is called a soft sweep and the complementary event of a single origin of the beneficial
allele, the classical case, a hard sweep. 
We give analytical expressions for the linkage disequilibrium (LD) between two neutral loci linked to the selected locus, depending on the recurrent mutation 
to the beneficial allele, measured by $D$ and $\widehat{\sigma_D^2}$,
a quantity introduced by \citet{Ohta1969}, and conclude that the LD-pattern of a soft sweep differs substantially from that of a hard sweep due to haplotype structure.
We compare our results with simulations.
\end{abstract}

\section{Introduction}
It is a long-standing question of evolutionary biology to decide about the relative importance of evolutionary factors such as selection versus genetic drift to shape
 patterns of diversity. 
Today this topic is studied based on DNA variation data taken from a sample of a population.
Important work on the effects of positive selection on patterns in DNA data was made by \citet{MaynardSmithHaigh1974}. They showed, that neutral variation linked to a beneficial
 allele also increases in frequency. This is called the hitchhiking effect and the resulting reduction of neutral variation is termed a selective sweep.  
When a beneficial allele fixes in a population, this allele can have a
 single or several origins, i.e.\ it can be brought to the population
 by a single or by several mutants. If several individuals, called founders, account
 for the fixation of a beneficial allele, we will talk as Pennings and
 Hermisson about a soft sweep and else about a hard sweep.

There are various reasons for a soft sweep. Adaptation can occur from recurrent migration, mutation or act on standing genetic variation.
 We treat here the case of recurrent 
mutation, which also applies to migration in a special case. Realistic models for recurrent migration in general can lead to more complex scenarios due to population structure.

It has been shown by Hermisson and Pennings in \citep{HermissonPennings2005,PenningsHermisson2006a}, that soft sweep events become frequent, if the scaled
 mutation rate $\theta_s= 4Nu_s$ (where $N$ is the diploid population size and $u_s$ the mutation probability to the beneficial allele per individual per generation) is sufficiently high. 
While hard sweeps dominate for $\theta_s<0.01$, both hard sweeps and soft sweeps occur in the range $0.01<\theta_s < 1$. For $\theta_s>1$ almost
 all adaptive substitutions will result in soft sweeps. Soft sweeps become likely for populations with large population sizes $N$ or for alleles with high recurrent mutation rates $u_s$.
 For example, most pathogens have extremely high population sizes. Therefore their genomes are good candidates for the detection of
soft selective sweeps, see e.g. \citep{NairEtAl2007} for research on soft sweeps in malaria parasites.
 \citet{Karasov2010} concluded lately that in Drosophila melanogaster there should exist a huge amount of soft sweeps due to tremendous short-term effective population
sizes relevant for adaptation. \citet{Schlenke2005} located some of these regions. Recent research by \citet{Scheinfeldt2009} shows, that the DNA pattern around the human
 gene ALMS1, causing the Alstroem Syndrome which presents with early childhood obesity and insulin resistance leading to Type 2 diabetes,
 may also be the result of a soft sweep.
Further \citet{TishkoffEtAl2007} found out, that different SNPs in the human genome all lying in the same short genome region of 110 bp are responsible
 for the human lactase persistence in the African and European human populations. In their studies
 of LD (measured by the D' value and the LOD score) the pattern
of a soft sweep can be recognized. Ongoing research argues for the importance of soft sweeps and polygenic adaptation, see for a review about this discussion e.g. \citep{Pritchard2010}. 

In order to detect soft sweeps it is important to understand the footprints they leave in DNA data.
For this purpose it is necessary to make
 statistical predictions available, which allow us to find targets
 of recent positive selection. 
\citet{PenningsHermisson2006a} showed that tests based on haplotype structure have high power to detect
soft sweeps. If a soft sweep occurred, the population can be divided into several haplotype groups, one for each founder. Without mutation and recombination during the sweep the genomes of the groups differ at 
the same loci as the founders differed in the beginning of the sweep. Especially, in the case of two founders each allelic 
variant of a SNP locus is always linked to a single haplotype group. So high linkage disequilibrium of 
two neutral loci in a neighborhood of the selected locus should be found.
This gives rise to the conjecture, that linkage disequilibrium is a useful quantity to detect soft sweeps. 

\smallskip
LD has been computed under neutrality by
\citet{Ohta1969}.
\citet{Stephan2006}, \citet{McVean2007} and \citet{PfaffelhuberLehnertStephan2008} gave analytical expressions for measures 
of LD after a hard selective sweep.  
\citet{Kim2004} developed a composite-likelihood method for detecting hard sweeps incorporating information
from measures of linkage disequilibrium based on simulation studies.
The aim of this article is to give analytical expressions for linkage disequilibrium under a 
selective sweep with recurrent mutation to the beneficial allele (see Theorem \ref{smallthm} and Theorem \ref{bigthm}). 
To determine the linkage disequilibrium we use an extended star-like
 approximation for the genealogy of the selected site, see Section \ref{GeneaDefi}. A similar approach was applied by \citet{PfaffelhuberLehnertStephan2008} to obtain 
the linkage disequilibrium in the case of a hard sweep. We will see, that soft sweeps 
produce a different signal than hard sweeps. In Section \ref{sim} we compare our computations with simulations.

\begin{figure}
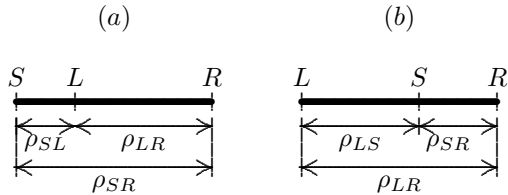

  \text{
    \parbox{1.5in}{\beginpicture
      \setcoordinatesystem units <1.3cm,1.1cm>
      \setplotarea x from 0 to 1.7, y from 0 to 2
      \plot 0 1.5 2 1.5 /
      \plot 0 1.6 0 1.4 /
      \plot 0.6 1.6 0.6 1.4 /
      \plot 2 1.6 2 1.4 /
      \put{$S$} [cC] at 0 1.8
      \put{$L$} [cC] at 0.6 1.8
      \put{$R$} [cC] at 2 1.8
      \plot 0 1.3 0 1.1 /
      \plot 0 0.6 0 1.3 /
      \plot 2 0.6 2 1.3 /
      \multiput {\tiny $\bullet$} at 0 1.5 *200 .01 .0 /
      \arrow <0.2cm> [0.375,1] from 0 1.2 to .6 1.2
      \arrow <0.2cm> [0.375,1] from 0.6 1.2 to 0 1.2
      \arrow <0.2cm> [0.375,1] from 2 1.2 to .6 1.2
      \arrow <0.2cm> [0.375,1] from 0.6 1.2 to 2 1.2
      \arrow <0.2cm> [0.375,1] from 2 0.7 to 0 0.7
      \arrow <0.2cm> [0.375,1] from 0 0.7 to 2 0.7
      \put{$\rho_{SL}$} [cC] at .3 1
      \put{$\rho_{LR}$} [cC] at 1.3 1
      \put{$\rho_{SR}$} [cC] at 1 .5
      \put{$(a)$} [cC] at 1 2.5
      \endpicture}} 
  \qquad
  \text{
    \parbox{1.5in}{\beginpicture
      \setcoordinatesystem units <1.3cm,1.1cm>
      \setplotarea x from 0 to 2, y from 0 to 2
      \plot 0 1.5 2 1.5 /
      \plot 0 1.6 0 1.4 /
      \plot 1.2 1.6 1.2 1.4 /
      \plot 2 1.6 2 1.4 /
      \put{$L$} [cC] at 0 1.8
      \put{$S$} [cC] at 1.2 1.8
      \put{$R$} [cC] at 2 1.8
      \plot 0 1.3 0 0.6 /
      \plot 1.2 1.3 1.2 1.1 /
      \plot 2 1.3 2 0.6 /
      \multiput {\tiny $\bullet$} at 0 1.5 *200 .01 .0 /
      \arrow <0.2cm> [0.375,1] from 0 1.2 to 1.2 1.2
      \arrow <0.2cm> [0.375,1] from 1.2 1.2 to 0 1.2
      \arrow <0.2cm> [0.375,1] from 2 1.2 to 1.2 1.2
      \arrow <0.2cm> [0.375,1] from 1.2 1.2 to 2 1.2
      \arrow <0.2cm> [0.375,1] from 2 0.7 to 0 0.7
      \arrow <0.2cm> [0.375,1] from 0 0.7 to 2 0.7
      \put{$\rho_{LS}$} [cC] at .6 1
      \put{$\rho_{SR}$} [cC] at 1.5 1
      \put{$\rho_{LR}$} [cC] at 1 .5
      \put{$(b)$} [cC] at 1 2.5
      \endpicture}} 
  \caption{\label{fig:geom}The two possible geometries of the selected locus
    ($S$) and the two neutral loci ($L$ and $R$). The scaled
    recombination rates between the two loci are given by $\rho_{SL}$,
    $\rho_{LR}$, $\rho_{LS}$ and $\rho_{SR}$.}
\end{figure}

\section{Model and measures of linkage disequilibrium}
\subsection{The frequency of the selected locus during a sweep}\label{freq}
We consider a DNA region of a population of $N$ diploid individuals and concentrate on the neighborhood of a bi-allelic selected locus S
 with a wild-type allele $b$ and beneficial allele $B$. The new beneficial allele $B$ with fitness advantage $s$ enters the population recurrently by mutation
 and is assumed to fix eventually.
 The population reproduces at the beneficial locus according to the Moran model in continuous time with selection and recurrent
mutation to the beneficial allele, i.e. denoting by $(X^{N}(t))_{t\geq0}$ the frequency of the individuals
 carrying the beneficial allele in a population of size $N$, 
$(X^{N}(t))_{t\geq0}$ is a jump Markov process with transition rates from

\begin{align}
& i/2N \textrm{ to } (i + 1)/2N \textrm{ at rate } u_s (2N-i) + (2N-i) \frac{i}{2N}(1+s)  \\
& i/2N \textrm{ to } (i-1)/2N \textrm{ at rate }  i \frac{(2N-i)}{2N},  
\end{align}
with $u_s, s \geq  0$.
The rate $u_s (2N-i)$ is the mutation rate, the rates $(2N-i) \frac{i}{2N}(1+s)$,  $i \frac{(2N-i)}{2N}$, respectively, are resampling rates which change
the frequency of the beneficial allele by plus $\frac{1}{2N}$, minus $\frac{1}{2N}$, respectively. Of course, resampling events inside
the beneficial and wild-type locus, which do not change the frequency of the beneficial allele, are also possible.

The frequency of the beneficial allele can be approximated for large $N$ by a differential equation:
\begin{satz}\label{kurzthm}
Denote by $X^N (t)$ the frequency of the beneficial allele in the Moran model with constant diploid population size $N$ at time $t$. 
Let $X^{N}(0)= \frac{\lceil \epsilon2N \rceil}{2N}$. Then the frequency of the beneficial allele
$X^N (t)$ converges for $N\rightarrow \infty$ to the solution of the differential equation:  
\begin{equation}\label{diffeq}
\dot{X}(t) = u_s (1-X(t))+ s X(t)(1-X(t)) 
\end{equation}
with initial condition
$$X(0)= \epsilon,$$
in the sense, that for all $\delta >0$ and all $t$
$$\lim \limits_{N \rightarrow \infty} P(\sup \limits_{s\leq t} |X^N(s) -X(s)| >\delta) =0.$$

\end{satz}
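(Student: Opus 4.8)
The plan is to recognise $(X^N(t))_{t\ge 0}$ as a density dependent jump Markov process in the sense of Kurtz and to obtain the limit by a martingale decomposition combined with a Gronwall estimate. Writing $x=i/2N$ for the density, the two transition rates become $b_N(x)=u_s\,2N(1-x)+2N(1-x)x(1+s)$ for a jump of size $+\tfrac1{2N}$ and $d_N(x)=2N\,x(1-x)$ for a jump of size $-\tfrac1{2N}$. Both rates are of order $N$ while the jumps are of order $1/N$, so the infinitesimal drift
\begin{equation*}
F(x):=\tfrac1{2N}\bigl(b_N(x)-d_N(x)\bigr)=u_s(1-x)+s\,x(1-x)
\end{equation*}
is independent of $N$ and coincides with the right-hand side of \eqref{diffeq}. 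The states $x\in\{0,1\}$ are the only ones at which $b_N$ and $d_N$ both vanish, so the chain never leaves $[0,1]$; consequently $F$, being a polynomial restricted to the compact interval $[0,1]$, is bounded and Lipschitz there, say with constant $L$. Hence \eqref{diffeq} with $X(0)=\epsilon$ has a unique solution, which stays in $[0,1]$ because $\dot X=(1-X)(u_s+sX)\ge 0$.

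First I would make the martingale decomposition precise. Applying Dynkin's formula to the identity function shows that
\begin{equation*}
M^N(t):=X^N(t)-X^N(0)-\int_0^t F\bigl(X^N(r)\bigr)\,dr
\end{equation*}
is a martingale whose predictable quadratic variation is $\langle M^N\rangle_t=\int_0^t (2N)^{-2}\bigl(b_N+d_N\bigr)\bigl(X^N(r)\bigr)\,dr$. Since $b_N+d_N\le C\,2N$ uniformly on $[0,1]$, we get $E\bigl[\langle M^N\rangle_t\bigr]\le Ct/(2N)$, and Doob's $L^2$-inequality yields $E\bigl[\sup_{r\le t}M^N(r)^2\bigr]\le 4Ct/(2N)\to 0$. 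Thus the fluctuation term vanishes uniformly on $[0,t]$ in probability.

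Next I would compare $X^N$ with the deterministic solution $X$. Subtracting the integrated form of \eqref{diffeq} from the decomposition above and using the Lipschitz bound on $F$ gives
\begin{equation*}
\sup_{u\le t}\bigl|X^N(u)-X(u)\bigr|\le \bigl|X^N(0)-\epsilon\bigr|+\sup_{u\le t}\bigl|M^N(u)\bigr|+L\int_0^t\sup_{r\le s}\bigl|X^N(r)-X(r)\bigr|\,ds.
\end{equation*}
Gronwall's lemma then bounds the left-hand side by $\bigl(|X^N(0)-\epsilon|+\sup_{u\le t}|M^N(u)|\bigr)e^{Lt}$. The initial discrepancy $|X^N(0)-\epsilon|=|\lceil \epsilon 2N\rceil/2N-\epsilon|\le 1/2N$ tends to $0$ deterministically, and the supremum of the martingale tends to $0$ in probability by the previous step, so for every fixed $t$ and $\delta>0$ we obtain $\lim_{N\to\infty}P(\sup_{u\le t}|X^N(u)-X(u)|>\delta)=0$, which is the assertion.

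I expect the only genuinely technical point to be the uniform-in-time control of the fluctuation term $M^N$; everything else is book-keeping. The boundary states $x\in\{0,1\}$, which are often the source of difficulties in such limits, cause no trouble here because both transition rates degenerate there, keeping the chain inside the compact set $[0,1]$ on which $F$ is globally Lipschitz. Alternatively, the whole statement is an instance of Kurtz's law of large numbers for density dependent families (Ethier and Kurtz, \emph{Markov Processes: Characterization and Convergence}), and one could simply verify its hypotheses, namely Lipschitz drift and convergent initial data, to conclude.
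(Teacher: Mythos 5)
Your argument is correct, and it is genuinely more self-contained than what the paper does: the paper disposes of Proposition \ref{kurzthm} in one sentence by invoking Theorem 3.1 of \citet{Kurtz1971} for density dependent families, whereas you reprove that theorem in this special case via the Dynkin martingale decomposition, the $O(1/N)$ bound on the predictable quadratic variation, Doob's $L^2$ maximal inequality, and Gronwall. Your identification of the density-dependent rates $b_N$, $d_N$ and of the drift $F(x)=u_s(1-x)+sx(1-x)$ matches the paper's transition rates exactly, and the Gronwall step is sound because $s\mapsto\sup_{r\le s}|X^N(r)-X(r)|$ is monotone. The trade-off is the usual one: citing Kurtz is shorter and makes clear the result is an instance of a general law of large numbers (which you also note in your closing paragraph), while your explicit route exposes where the hypotheses enter --- bounded total jump intensity of order $N$, jumps of order $1/N$, Lipschitz drift on the invariant compact $[0,1]$, convergent initial data. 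One small imprecision worth fixing: at $x=0$ the birth rate $b_N(0)=2Nu_s$ does \emph{not} vanish when $u_s>0$; only $d_N(0)=0$. That is all you need (the chain cannot exit $[0,1]$ because the downward rate vanishes at $0$ and both rates vanish at $1$), but the sentence claiming both rates vanish at both endpoints should be corrected.
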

The proof is an easy application of Theorem 3.1 in \citep{Kurtz1971}. 

Equation (\ref{diffeq}) has the solution
\begin{equation}
 X(t)= \frac{(\epsilon s + u_s)s e^{s t} - (s - \epsilon s)
    u_s e^{
     -u_s t}}{ s ((\epsilon s + u_s )
      e^{s t} + (s - \epsilon s) e^{-u_s t})}.
\end{equation}
In this approximation we say that the allele fixes in the population,
 if $X_{T}=1-\epsilon$. For the above equation,
 this happens at time $T= \frac{1}{s+u_s} \log\Big(\frac{(1-\epsilon)(u_s+s(1-\epsilon))}{\epsilon(\epsilon s+u_s)}\Big)$. 
With $\epsilon= 1/\alpha$, we obtain $T=\frac{1}{s+u_s} \log\Big(\frac{(\alpha-1)(\alpha-1+(\theta_s/2))}{1+(\theta_s/2)}\Big)$, denoting
by $\alpha:=2Ns$ and $\theta_s:=4Nu_s$. For small $u_s \ll s$ and large $N$, such that $\alpha\gg 1$, the fixation time $T$ is approximately
$2(\log{\alpha})/s$.
The fixation time will be relevant below.

\subsection{Measures of linkage disequilibrium} 
Our aim is to provide analytical results for the linkage disequilibrium of two neutral loci in a neighborhood of the selected locus.
 Different quantities have been proposed to measure the association of two loci. We will compute two of them here.
 Consider two neutral loci $L$ and $R$, linked to the selective locus $S$.
 The neutral loci can either lie both on the same side of the selected locus or the selected locus
 lies between the neutral loci, see Figure \ref{fig:geom}. (If both neutral loci lie on the left side of the selected locus, we name the leftmost locus $R$-locus
and the locus in the middle $L$-locus, i.e. we have the ordering $R$ $L$ $S$.)
We consider only loci with exactly two allelic
variants. Denote them by $L/\ell$ and $R/r$ and their allelic frequencies by $q_{\ell}, q_L$, etc.

\begin{defi}[$D_{\ell,r}$ and $\widehat{D_{\ell,r}}$ ]

The simplest approach to measure linkage disequilibrium between the allele $\ell$ of the L-locus and the allele $r$ of the R-locus is to compute the quantity
\begin{equation}
 D_{\ell,r} :=  q_{\ell r} - q_{\ell} q_r.
\end{equation}
If $D_{\ell, r}$ is zero the alleles $\ell$ and $r$ are said to be in linkage equilibrium, else in linkage disequilibrium.

In practice, the population frequencies $q_{\ell},  q_r$, etc.~are often not available, but only the allelic frequencies
 in a sample, $\widehat{q}_{\ell}$,$\widehat{q}_r$, etc.
In samples LD can be measured by 
$$\widehat{D_{\ell,r}}:= \widehat{q}_{\ell r} - \widehat{q}_{\ell} \widehat{q}_r.$$ 
\end{defi}
\begin{bem} \label{D}
\emph{An easy calculation shows, that
\begin{equation*} 
D_{\ell, r}= D_{L,R}= - D_{\ell,R}=- D_{L,r}
\end{equation*}
and analogous equalities hold for $\widehat{D}$.}
\end{bem}

Averaging $D_{\ell,r}$ over all allelic variants gives zero due to Remark \ref{D}.
Hence it makes sense to consider
 $D_{\ell,r}^2$.
Since $D_{\ell,r}^2= D_{\ell,R}^2= D_{L,R}^2= D_{L,r}^2$ the quantity $D_{\ell,r}^2$ actually does not depend on the allelic
 variant $\ell$,$r$. Therefore we write $D^2$ instead. 
However $D^2$ depends strongly on the size of the allelic variants: If $q_{\ell}$ and $q_r$ are small, $D^2$ is also small, whenever the
allelic variants may be not in association at all. 
Therefore the so called standard linkage disequilibrium, introduced by \citet{Ohta1969}, is often considered.
\begin{defi}[$\sigma^2_D$ and $\widehat{\sigma^2_D}$]
The standard linkage disequilibrium $\sigma^2_D$, $\widehat{\sigma^2_D}$ in the sample, respectively, is given by
$$ \sigma^2_D = \frac{\mathbbm{E}[D_{l,r}^2]}{\mathbbm{E}[q_{\ell} (1-q_{\ell}) q_r(1-q_r)]},$$ and
$$\widehat{\sigma^2_D} = \frac{\mathbbm{E}[\widehat{D_{l,r}}^2]}{\mathbbm{E}[ \widehat{q}_{\ell} (1-\widehat{q}_{\ell}) \widehat{q}_r (1-\widehat{q}_r)]},$$
respectively.
\end{defi}

\begin{rems}
\emph{
\begin{itemize}
\item Note, that $\frac{\mathbbm{E}[\widehat{D_{l,r}}^2]}{\mathbbm{E}[ \widehat{q}_{\ell} (1-\widehat{q}_{\ell}) \widehat{q}_r(1-\widehat{q}_r)]}$ does not 
depend on the particular allelic variant $\ell, r$, too. Therefore it makes sense to write $\widehat{\sigma^2_D}$, instead of $\widehat{\sigma^2_{D_{\ell,r}}}$
\item We compute linkage disequilibrium during the sweep. So, if time is important, we will write $D_{\ell, r}(t):=  q_{\ell r}(t) - q_{\ell}(t) q_r(t)$, etc.
\item \citet{PfaffelhuberLehnertStephan2008} have computed $\mathbbm{E}[D_{\ell, r}(0)| D_{\ell,r}(T)]$ and $\widehat{\sigma^2_D}$ for a hard sweep, i.e. the case $u_s=0$.
See Figure \ref{fig:Theo} for a plot of $\widehat{\sigma^2_D}$ under neutrality and for $\theta_s=0$ and $\theta_s= 0.1$.
\item Naturally one would consider the quantity $$r^2:=  \mathbbm{E}\Big[\frac{\widehat{D}^2}{\widehat{q}_L(1-\widehat{q}_L) \widehat{q}_R(1-\widehat{q}_R)}\Big].$$ But this
quantity is less attractive for analytical studies, because it is mathematically difficult to handle. However, see the recent paper of \citet{SongSong2007} for an analytical computation of $r^2$ under
neutrality.  
\end{itemize}}
\end{rems}

\subsection{Genealogies: Motivation}
We want to compute $\mathbbm{E}[D_{\ell,r}(0)|D_{\ell,r}(T)]$ and $\widehat{\sigma^2_D}$ at
 the end of the sweep assuming small sample sizes $n\ll N$ for the computation of $\widehat{\sigma^2_D}$.
We will use the 1-1-correspondence between the probability to draw two pairs of heterozygous neutral loci and $\widehat{\sigma^2_D}$, (see 
step 3 of the proof of Theorem \ref{bigthm}). The probability to draw a heterozygous pair at the end of the sweep differs from the probability to draw a heterozygous pair
at the beginning of the sweep due to the change of the genealogy during the sweep. We shall start with heterozygous pairs of a sample 
taken from the population at the end of the sweep and follow the lines of the pairs till the beginning of 
the sweep. In our notation time is running backwards starting from time $T$ of fixation, i.e. if $t_2> t_1$ the time $t_2$ lies further back in the past then the time $t_1$.  For example 
$\mathbbm{E}[D_{\ell,r}(0)|D_{\ell,r}(T)]$ is the expected value of $D_{\ell,r}$ at the end of the sweep given $D_{\ell,r}$ at the beginning of the sweep.

To define the genealogies of two neutral loci in the neighborhood of a selected locus in a Moran model we would have to extend the Moran model as introduced in
Section \ref{freq} to a full three-locus model. However, multi-locus genealogies under such a Moran model are very complex. Under  
certain conditions star-like genealogies approximate the genealogies
 of the Moran model quite well and allow a computation of the above probabilities due to independent genealogical lines.    
In the following we introduce such star-like genealogies and justify why it is reasonable to use them in our setting.  

We suppose, that neutral mutations occur according to a Poisson Process with rates of order $\mathcal{O}(1/N)$. 
Since the sweep takes only of order $\log(2Ns)/s$ time units, we can ignore neutral mutations during the sweep.
Moreover, back-mutations are rapidly sorted out as they have no fitness advantage. Hence we will ignore back-mutations, too.

\begin{figure}
   \subfigure
             {
              \includegraphics[scale=0.55]{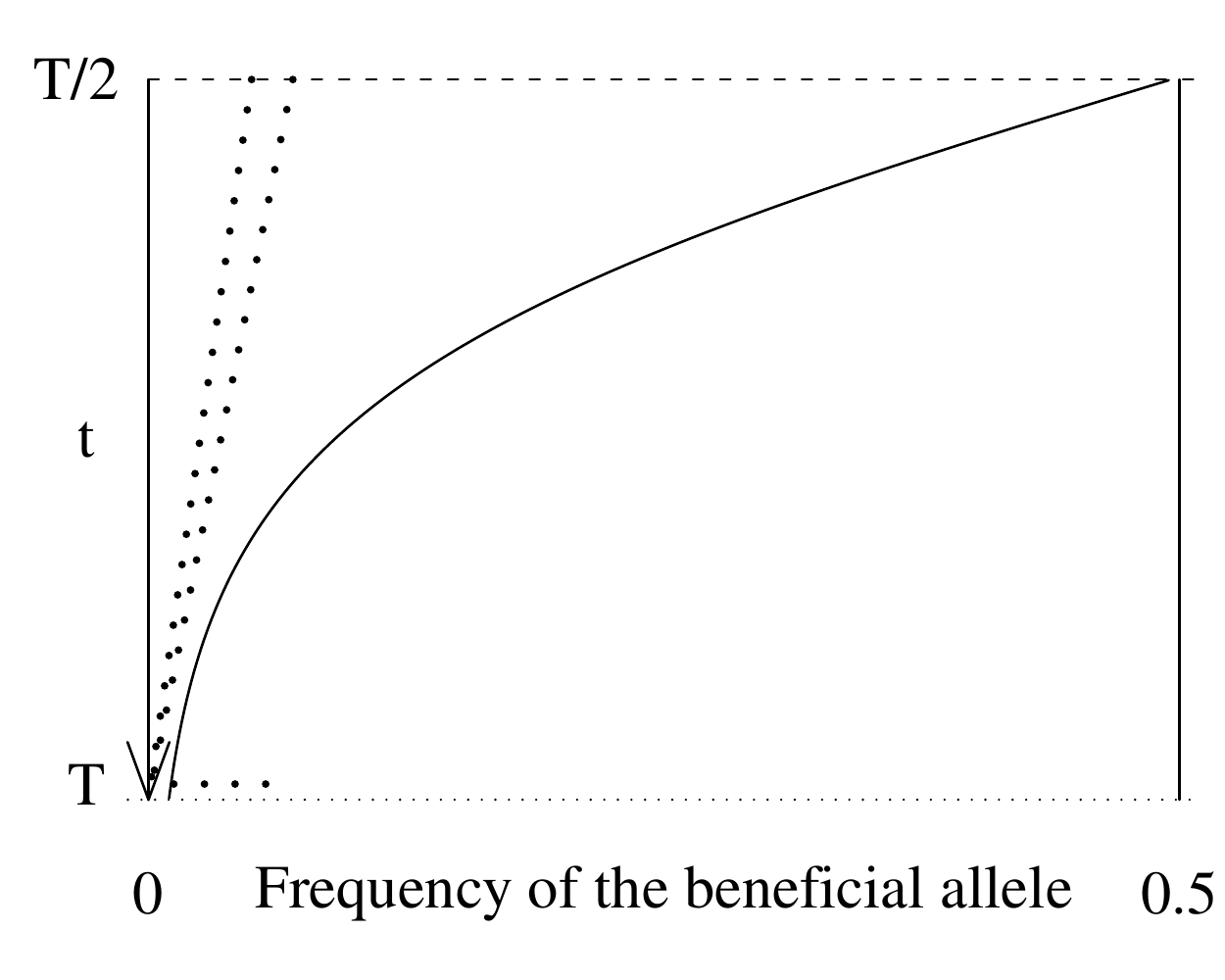}
             }
   \subfigure
             {
            \includegraphics[scale=0.55]{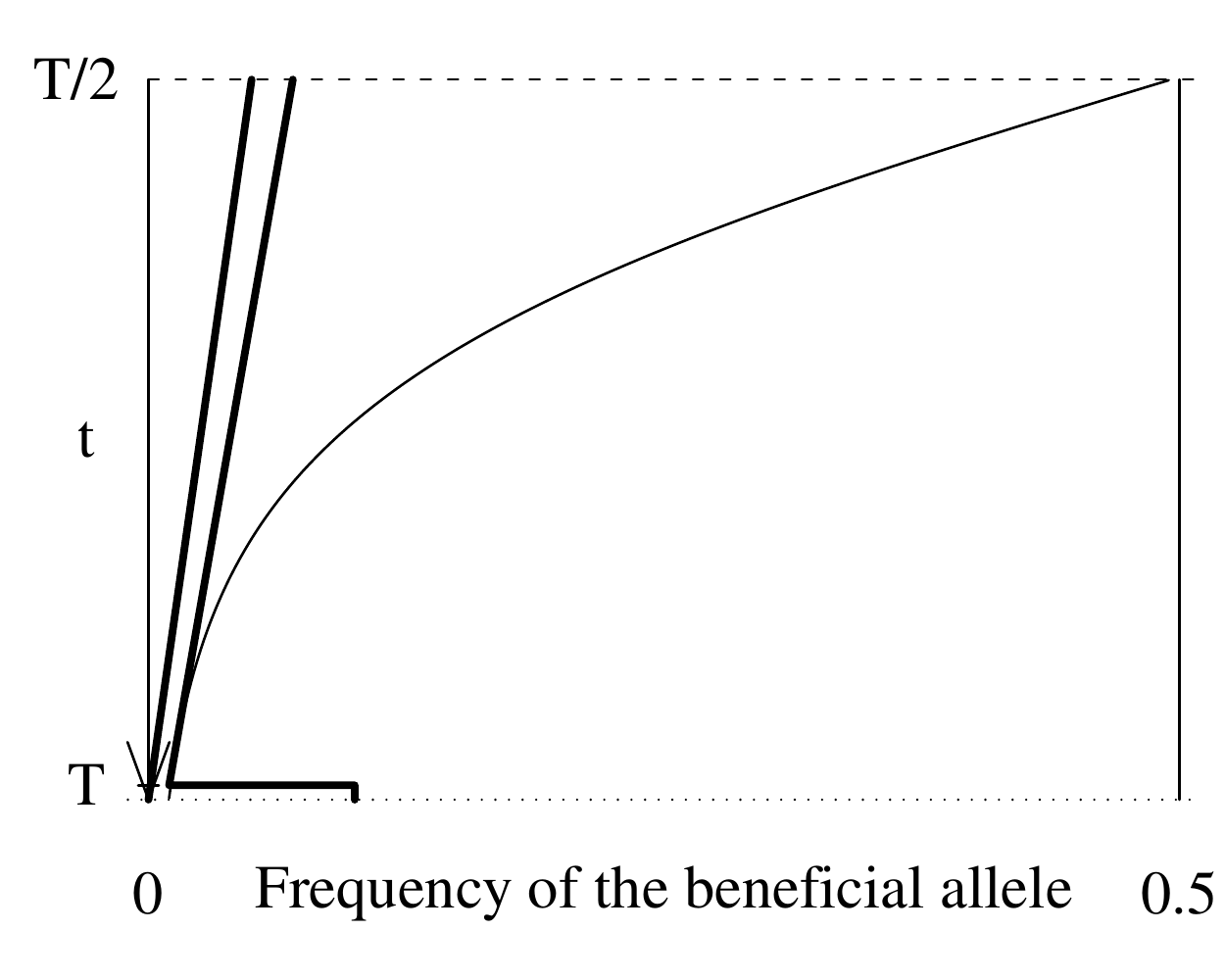}
             }
   \caption{The figure on the left shows two lines which coalesce before they mutate. The figure on the right shows two lines which mutate before they coalesce (further back in the past).}
    \label{fig:CoalMut1}
\end{figure}

\subsubsection*{Coalescent and mutations to the beneficial allele}
\begin{itemize}
 \item The rate of coalescence of two lines at time $t$ under the condition, that the two lines are in the beneficial background at time $t_{-}$ and the
 frequency of beneficial allele is $X_{t_{-}}= \frac{i}{2N}$ is equal to $$\frac{\frac{1}{(2N)(2N-1)} \frac{i(i-1)}{2N}}{ \frac{i(i-1)}{(2N)(2N-1)}} +
 \frac{\frac{1}{(2N)(2N-1)} \frac{(i-1)(2N-i+1)(1+s)}{2N}}{\frac{i(i-1)}{(2N)(2N-1)}}
= \frac{(1+s)(2N+1)}{2Ni} -\frac{ s}{2N} $$ for $i>1$. The parents of the beneficial offspring
are either both from the beneficial background or one is from the wild-type and the second from the beneficial background; for similar calculations see \citep{BartonEtheridgeSturm2004}, Lemma 2.4.
 For large $N$ this rate is approximately 
\begin{equation}\label{coalrate}
\frac{1+s}{2NX(t)},
\end{equation} since for large $N$ the
 frequency $X^{N}(t)$ is well 
approximated by the solution $X(t)$ of the differential Equation (\ref{diffeq}) by Proposition \ref{kurzthm}.

\begin{figure}
 \begin{center}
\includegraphics[scale=0.6]{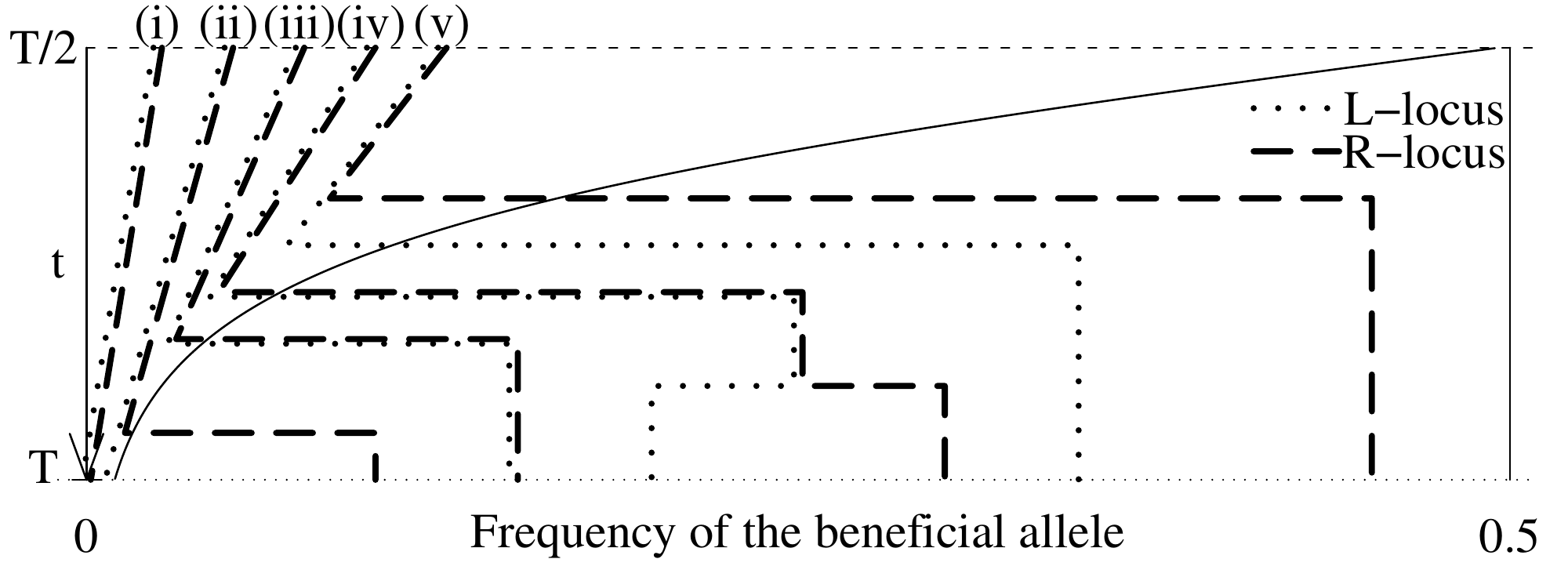} 
\end{center}
  \caption{\label{fig:starApproximation3}
Possible lines in the time interval $ [T/2,T)$ (backward in time)
 for geometry (a).
 The corresponding probabilities are given in Table
     \ref{tab:geoalpha}.
}
\end{figure}

\item If an individual mutates to the beneficial type, the genealogical line of this individual jumps (forward in time)
 from the wild-type background to the beneficial background. 
Backward in time the line is located at time $t_{-}$ in the beneficial background and at time $t$, after the mutation event, 
 in the wild-type background. 

The rate of mutation to the beneficial background of a line at time $t$ under the
 condition, that the frequency
of the beneficial allele is $X_{t_{-}}= \frac{i}{2N}$ at time $t_{-}$, is equal to
$$\frac{u_s \frac{2N-i}{2N}}{\frac{i}{2N}},$$
 
Analogous argumentations as for the coalescence rate yield, that this rate is approximately 
 \begin{equation}\label{mutrate}
\frac{u_s(1-X(t))}{ X(t)}\end{equation}
 for large N.
\end{itemize}

Both rates scale with $\frac{1}{X(t)}$, which means that the coalescent and mutation rates
 are high, if the frequency $X(t)$ is small.
 Hence it makes sense to assume that all mutations to the beneficial allele and all coalescent events occur at time $t=T$, i.e. at the beginning of the sweep. 
This scaling of the backward mutation rate shows that the star-like approximation, which has before been used for the classical
 hard sweeps case, should also be appropriate for soft sweeps.

With the approximate mutation and coalescent rates (\ref{mutrate}) and (\ref{coalrate}) the probability for a hard sweep in a sample of two can be bounded, see
also \citep{PenningsHermisson2006a} for a similar calculation in a Wright-Fisher-model formulation.
The probability for a hard sweep of two lines equals the probability, that the coalescent event happens before the mutation event.
 The mutation rate in a sample of two lines is approximately $2\frac{u_s(1-X(t))}{ X(t)} = \frac{\theta_s (1-X(t)}{2N X(t)}$,
 if terms of order $(2 N u_s)^2$ are ignored.

\begin{figure}
 \begin{center}
\includegraphics[scale=0.6]{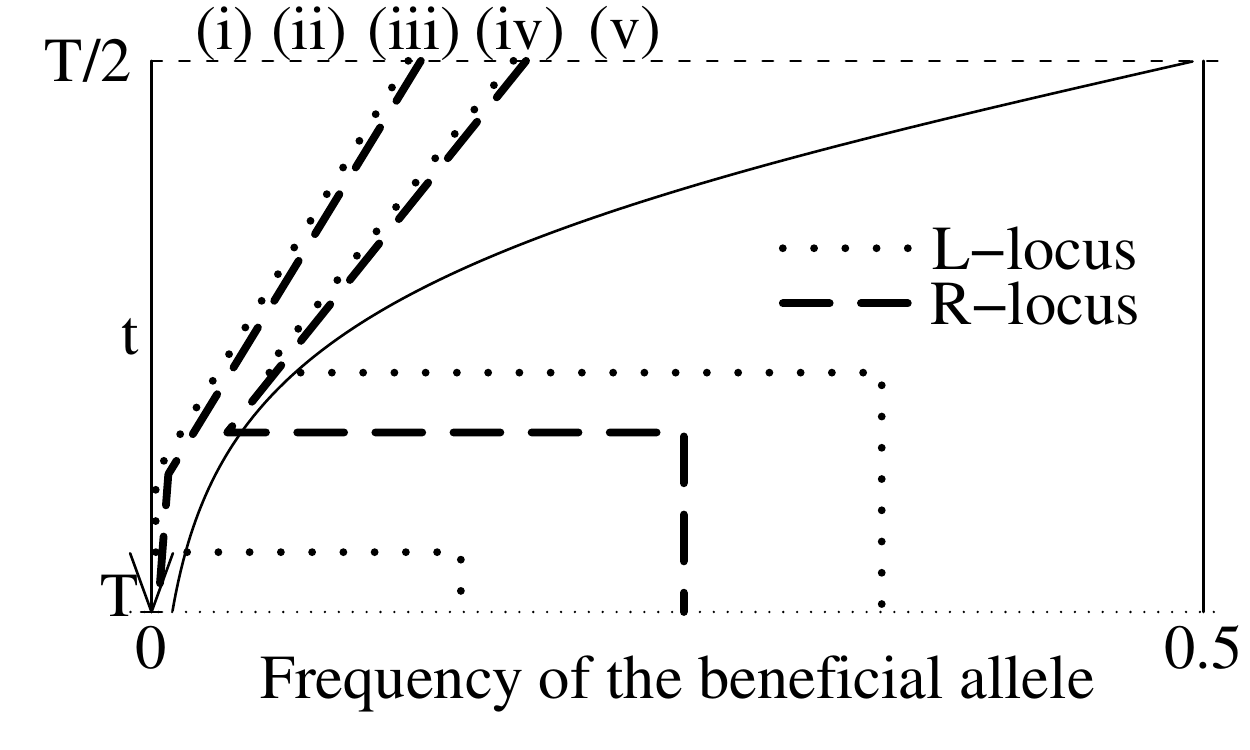} 
\end{center}
  \caption{\label{fig:starApproximation4}
Same as Figure
     \ref{fig:starApproximation3}
 for geometry (b). The lines (i), (ii) and (v) are not shown, as they are the same as in Figure  \ref{fig:starApproximation3}.
  The corresponding probabilities are given in Table \ref{tab:geobeta}.
}
\end{figure}

Let $C=(C_t)_{0\leq t \leq T}$, $M=(M_t)_{0 \leq t \leq T}$ be two independent Poisson processes with rates $\lambda(t)= \nolinebreak \frac{1+s}{2N X(T-t)},
 \mu(t) = \frac{ \theta_s(1-X(T-t))}{2N X(T-t)}$  respectively. Denote by
$S_1$ the first jump time of $C$ and by $T_1$ the first jump time of $M$.
Then the probability for a hard sweep in a sample of two is
$$P_{hard,2}:= P(S_1< T_1).$$ 
If the initial frequency of the beneficial allele $X(0)= \epsilon$ is small, 
\begin{align*}
P(S_1< T_1) & = \int \limits_0^{\infty} P(t<T_1) f_{S_1}(t) dt = 
\int \limits_{0}^{T} \exp\Big(-\int \limits_{0}^{t} \mu(\tau)d\tau\Big) \lambda(t) \exp\Big(-\int \limits_0^t \lambda(\tau) d\tau \Big) dt = \\
&=\int \limits_{0}^{T} \frac{1+s}{2N X(T-t)} \exp\Big(-\int \limits_{0}^{t} \frac{1+s+\theta_s(1-X(T-\tau))}{2N X(T-\tau)} d\tau \Big)dt= \\
& = \frac{1+s}{1+s+\theta_s} \Bigg( \int \limits_{0}^{T}\frac{1+s + \theta_s(1-X(T-t))}{2N X(T-t)}
 \exp\big(- \int \limits_{0}^{t} \frac{ 1+ s+\theta_s(1-X(T-\tau))}{2N X(T-\tau)} d\tau \big)dt\\
&\quad + \frac{\theta_s X(T-s)}{2N X(T-s)}\int \limits_{0}^{T} 
 \exp\big(-\int \limits_{0}^{t} \frac{ 1+s+\theta_s(1-X(T-\tau))}{2N X(T-\tau)} d\tau \big)dt \Bigg) \approx \\
& \approx \frac{1+s}{1+s+\theta_s}(1+ u_s \mathcal{T}),
\end{align*}
where $f_{S_1}(t)$ denotes the density of the probability measure induced by $S_1$ and
 $\mathcal{T}$ denotes the expected time for the first coalescent or mutation event. 
In the penultimate equation the first summand is the probability for a coalescence or a mutation event, this probability is approximately 1 for $\epsilon$ small.

\begin{table}
  \begin{center}
    \begin{tabular}{|c|c|c|}\hline 
      case & event & probability \\ \hline\hline
      \rule[-3mm]{0cm}{.7cm}(i) & \parbox{7cm}{no recombination event} & $p_{SL}p_{LR}$ \\\hline
      \rule[-8mm]{0cm}{1.8cm} (ii) & \parbox{7cm}{a $LR$-recombination event makes the allele at the $R$-locus escape the sweep without the allele at the $L$-locus} & $p_{SL}(1-p_{LR})$ \\\hline
      \rule[-8mm]{0cm}{1.8cm}  (iii) & \parbox{7cm}{by a $SL$-recombination event the line escapes the sweep and the alleles at the $L$- and $R$-locus stay linked} & $(1-p_{SL})p_{LR}$ \\\hline
      \rule[-22mm]{0cm}{4.8cm}  \parbox{.6cm}{(iv)\\[5ex](v)} & \parbox{7cm}{a $SL$-recombination event brings the alleles at the $L$- and $R$-loci linked into the wild-type background; here, the ancestry of both alleles is split by a $LR$-recombination\\[.5ex]a $LR$- and a $SL$-recombination event bring first the allele at the $R$-locus and then the allele at the $L$-locus into the wild-type background} & \parbox{3.8cm}{$\mathbb P[\text{(iv) or (v)}]$ \\ $\;\;=(1-p_{SL})(1-p_{LR})$} \\\hline
    \end{tabular}
  \end{center}
  \caption{\label{tab:geoalpha}Probabilities of several events happening 
    between times $T/2$ and $T_{-}$ for geometry (a); see Figure 
    \ref{fig:starApproximation3}. All events are described 
    backwards in time.} 
\end{table}

The time $\mathcal{T}$ lies approximately between 0 and the fixation time $T$ for $\epsilon$ small. So for $s\ll \theta_s$ we can (approximately) bound the probability for
 a hard sweep in a sample of two by
$$\frac{1}{1+\theta_s}\leq  P_{hard,2} \leq \frac{1}{1+\theta_s}(1+ u_s T)$$
Hence, the probability for a soft sweep in a sample of two can be (approximately) bounded by
$$\frac{\theta_s}{1+\theta_s} \geq   P_{soft,2} \geq \frac{\theta_s}{1+\theta_s}(1-u_s T) . $$

We can generalize this approach to obtain the (approximate) distribution of the number
of founders and offspring in a sample of size $n$.
It is given by Ewens sampling formula: 

In a sample of size $n$ at time 0, the probability, that there are $a_j$ founders of the sweep (with respect to the selected locus)
 which have $j$ offspring for $j \in \{1, ..., n\}$ is given by Ewens sampling formula
\begin{equation}\label{Ewenssamp}
 \frac{n!}{\theta_s(n)}  \prod \limits_{j=1}^{n} \frac{(\theta_s/j)^{a_j}}{a_j!} 
\end{equation}
where $\theta_s(n) = \theta_s \cdot (\theta_s+1) \cdot \dots \cdot (\theta_s+n-1).$
See \citep{PenningsHermisson2006b} for a derivation of this formula in a Wright-Fisher-model formulation.

We will assume in our approximation of the genealogy, that the number of founders and the number of their offspring is Ewens distributed as given in Equation 
(\ref{Ewenssamp}).

\begin{table}
  \begin{center}
    \begin{tabular}{|c|c|c|}\hline 
      line & event & probability \\ \hline\hline
      \rule[-3mm]{0cm}{.7cm}(i) & \parbox{7cm}{no recombination event} & $p_{LS}p_{SR}$ \\\hline
      \rule[-8mm]{0cm}{1.8cm} (ii) & \parbox{7cm}{a $SR$-recombination event makes the allele at the $R$-locus escape the sweep without the allele at the $L$-locus} & $p_{LS}(1-p_{SR})$ \\\hline
      \rule[-8mm]{0cm}{1.8cm}  (iii) & \parbox{7cm}{a $LS$-recombination event makes the allele at the $L$-locus escape the sweep without the allele at the $R$-locus} & $(1-p_{LS})p_{SR}$ \\\hline
      \rule[-17mm]{0cm}{3.5cm}  \parbox{.6cm}{(iv)\\[7ex](v)} & \parbox{7cm}{a $LS$-recombination event followed by a $SR$-recombination event bring the alleles at the $L$- and $R$-locus into the wild-type background\\[.5ex]same as (iv) but in reverse order of the $LS$- and $SR$-recombination events} & \parbox{3.8cm}{$\mathbb P[\text{(iv) or (v)}]$ \\ $\;\;=(1-p_{LS})(1-p_{SR})$} \\\hline
    \end{tabular}
  \end{center}
  \caption{\label{tab:geobeta}Probabilities for events happening between time $T/2$ and $T_{-}$ for geometry (b); see Figure 
    \ref{fig:starApproximation4}. All events backward in time} 
\end{table}
\subsubsection*{ Recombination events}
Forward in time at a recombination event two lines merge into one. If a recombination event occurs
 between two neighboring loci $L_1$ and $L_2$ (we will write $L_1 L_2$-recombination event, for short), such that $L_1$ lies on the left side of $L_2$, the
 offspring carries
at all loci left of the locus $L_1$ including the locus $L_1$ the alleles of the first parent and at the remaining loci the alleles of the
 second parent (with $L_1,L_2 \in \{L, R, S\}$).
Backward in time at a recombination event one line splits up into two lines.

Since coalescence events occur at the beginning of the sweep,
 one can assume, that each
 recombination event affects only a single line. The probability for no recombination event in the time interval $[t_1,t_2]$
is given by the probability, that the first jump time of a Poisson process started at time $t_1$ with rate $r(t)$ does not occur until
time $t_2$. The rate $r(t)$ depends on the different kinds of recombination events and is specified in the following.  
\begin{itemize}
\item The frequency $X_t$ stays between backward time $0$ and $T/2$ almost the whole time near 1 and is certainly greater 1/2. (The larger $\alpha$
the longer $X_t$ remains in a small neighborhood of 1.)
So, in the first half, recombination between the backgrounds is not frequent.
 Furthermore if $L$, $R$ and $S$ are arranged according to geometry
(a) $SL$-recombination events inside the beneficial background cannot be
seen in the DNA-data. The only events that can be recognized in the DNA-data and occur at a non negligible amount are $LR$-recombination
 events in the beneficial 
background.
If the loci are arranged according to geometry (b) all recombination events in the beneficial locus may be seen in the data.

The rate of recombination events between loci $L_1$ and $ L_2$ (with $L_1, L_2 \in \{L,R,S\}$ for geometry (b) and $L_1=L$ and $L_2=R$ for geometry (a))
 in the beneficial background is approximately $\frac{r_{L_1 L_2} X(T-t) X(T-t)}{X(T-t)}$ with $r_{L_1 L_2} \geq 0$.
Therefore the probability for no $L_1 L_2$-recombination event is given by
\begin{equation}\label{prob}
\exp \Big(-\int_{0}^{T/2} r_{L_1 L_2} X(T-t) dt \Big). 
\end{equation}
 As long as $u_s$ is small, the differential Equation (\ref{diffeq}) is only a small perturbation of the differential equation
$  \dot{X}(t) = \nolinebreak[10] s X(t)(1-X(t)).$  For this equation the integral in Equation (\ref{prob}) is equal
 to $ r_{L_1 L_2} \log(\alpha)/s + r_{L_1 L_2} \log(2)/s.$
Since the second summand is small, we approximate (\ref{prob}) by $$p_{L_1 L_2}:= \exp(- \rho_{L_1 L_2} \log(\alpha)/\alpha), $$ 
where $\rho_{L_1 L_2}:= 2N r_{L_1 L_2}$ denotes the recombination rate between the locus $L_1$ and $L_2$.

\begin{figure}
 \begin{center}
\includegraphics[scale=0.5]{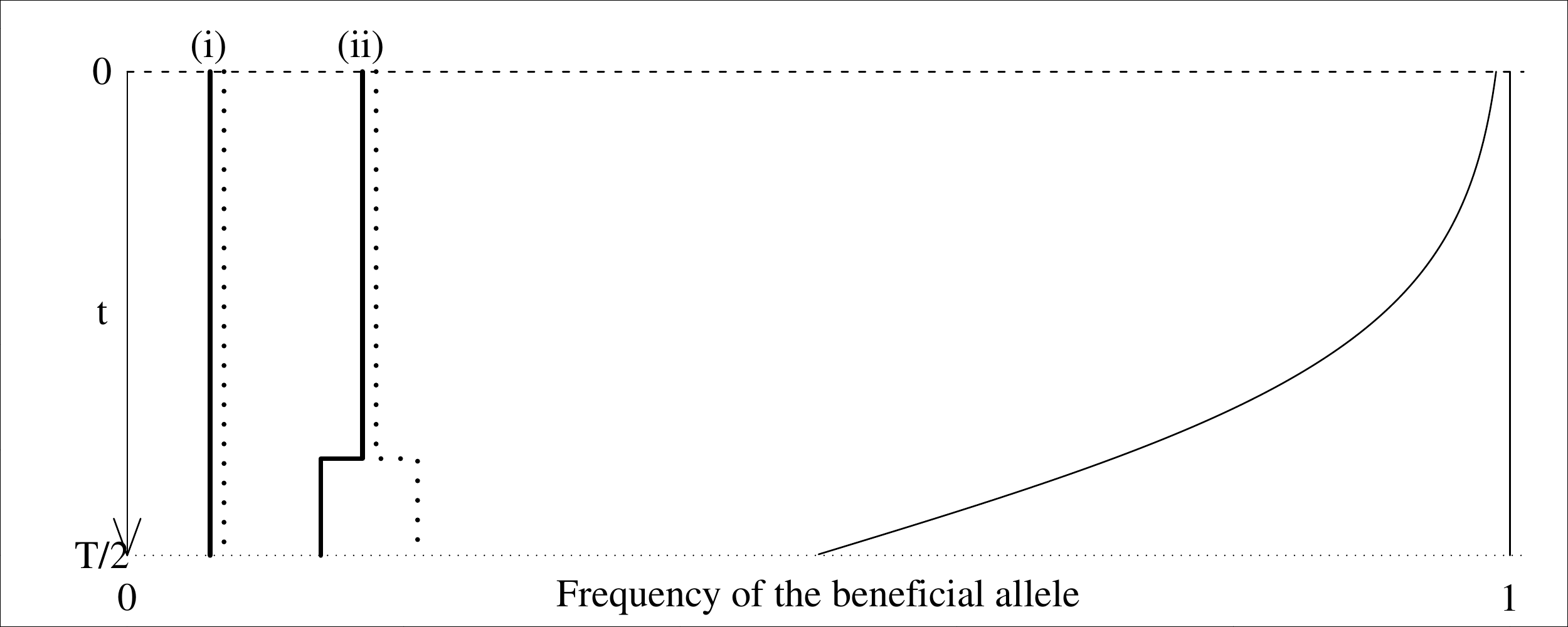} 
\end{center}
  \caption{\label{fig:ZeitintervallI3}
Possible split of two linked neutral loci: Two alleles at the neutral loci linked to the beneficial allele either (i) 
have a common ancestor at time $T/2$ or (ii) have two different ancestors that are both linked to a beneficial allele.}
\end{figure} 

\item In the time interval $[T/2,T]$ all recombination events with offspring in the beneficial background
 except recombination events inside the beneficial locus are probable to be seen in the data. Similar
arguments as above lead to the following assumption:
The probability for no recombination between locus $L_1$ and $L_2$ in the time interval $[T/2,T]$ is given by 
$$p_{L_1 L_2} := \exp(- \rho_{L_1 L_2} \log(\alpha)/\alpha). $$

\end{itemize}

See Figure \ref{fig:starApproximation3} - \ref{fig:ZeitintervallI3}
 for an illustration of the different types of recombination events possible in the time intervals $[0,T/2)$ and $[T/2,T)$. 

With this motivation, we can define an extended star-like genealogy:
 
\subsection{Genealogies: Definition} \label{GeneaDefi}

The joint genealogy of two neutral loci in the neighborhood of the selected locus can be defined as a structured partition-valued process.
Denote by $\Xi_A:= \{ \xi_A | \xi_A \textrm{ partition of } A \}$ the set of partitions of a set
$A$.
A partition $\xi_A = \{\xi_1, ..., \xi_m\} $ is called finer than a partition $\xi^{'}_A =\{\xi^{'}_1, ..., \xi^{'}_k\}$, iff
for each $j\in\{1, ..., m\}$ exists an $i \in\{1, ..., k\}$, such that $\xi_{j} \subseteq \xi^{'}_i$. 
We write  $\xi_A \preccurlyeq \xi^{'}_A$, if $\xi_A$ is finer than $\xi^{'}_A$.
Let $\xi= \{\xi_1, ..., \xi_m\} \in \Xi_A$ and $\eta= \{\eta_1, ..., \eta_k\}  \in \Xi_{\{1, ..., m\}}$, $m\geq k$, 
then the composition $$\eta \circ \xi := \bigg\{ \bigcup_{j\in \eta_1} \xi_j, ..., \bigcup_{j \in \eta_k} \xi_j \bigg\}.$$
A structured partition of $A$ is a tuple $(\xi_A^{B}, \xi_A^{b})$
 with $\{\xi^{B}_A \cup \xi^{b}_A\} \in  \Xi_A$ and $\xi_A^{B} \cap \xi_A^{b}= \emptyset$.
Partition elements in $\xi^{B}_A$ ($\xi^{b}_A$) are called beneficial (wild-type). 
 Denote by $$\Xi^{B,b}_A:= \{ (\xi_A^{B}, \xi_A^{b}) | (\xi_A^{B}, \xi_A^{b}) \textrm{ is a structred partition of } A \}$$
the set of structured partitions of the set A. Elements of a structured partition $(\xi_A^{B}, \xi_A^{b})$ are of the form $(\xi_1, \xi_2)$ with 
$\xi_1 \in \xi_A^{B}$ and $\xi_2\in \xi_A^{b}$.

Define $\dickm \ell:= \{1, ...., n \}$ the set of the L-loci and $\dickm r:= \nolinebreak[10] \{n+1, ..., 2n\}$ the set of the R-loci of a sample of size $n$ of the population.
We are interested in the structured partitions of $\dickm \ell \cup \dickm r$.

For geometry (a) the different kinds of recombination events can change the structured partition $(\xi^{B}, \xi^{b})$ to
\begin{itemize}
 \item $\left(\xi^B\setminus\{\xi^B_j\}, \xi^b\cup \{\xi^B_j\}\right),$ if an $SL$-recombination event happens between a wild-type and beneficial line
 and the offspring carries at the selected locus the beneficial allele (thus at the $L$- and $R$-locus the individual carries the alleles of the wild-type line). \\
 \item $\left((\xi^B\setminus\{\xi^B_j\}) \cup \{\xi^B_j\cap \dickm \ell\}, \xi^b\cup \{\xi^B_j\cap\dickm r\} \right)$, if an $LR$-recombination event happens
between an individual of the beneficial background and an individual of the wild-type background and at the $S$ and $L$-locus the beneficial line is carried
 on (forward in time).\\
 \item $\left((\xi^B\setminus \{\xi^B_j\}) \cup \{\xi^B_j\cap \dickm \ell, \xi^B_j\cap \dickm r\}, \xi^b\}\right)$, if an $LR$-recombination event
 happens between two individuals of the beneficial background.\\
  \item $\left(\xi^B, (\xi^b \setminus\{\xi^b_k\}) \cup \{\xi^b_k\cap \dickm \ell, \xi^b_k\cap\dickm r\}\right),$ if an $LR$-recombination event happens between
 two individuals of the wild-type background.\\
 \item  $\left(\xi^B\cup \{\xi^b_k\}, \xi^b\setminus\{\xi^b_k\}\right),$ if an $SL$-recombination event happens between a beneficial and a wild-type line
 and the offspring carries the beneficial allele.\\
 \item $\left(\xi^B\cup \{\xi^b_k\cap\dickm r\}, (\xi^b\setminus\{\xi^b_k\}) \cup \{\xi^b_k\cap \dickm \ell\}\right),$ if an $LR$-recombination event happens 
between a beneficial and wild-type line and at the selected and $L$-locus the wild-type is carried on (forward in time).
\end{itemize}

For geometry (b) the partitions change in an analogous manner.

Before we give the definition of an extended star-like genealogy we define genealogies and  samples.
\begin{defi}
The genealogy of a set $A$ is a four-time step Markov chain
$$(\xi_t)_{t\in\{0,T/2,T_{-},T\}} =((\xi^{B}_{0},\xi^{b}_{0}), (\xi^{B}_{T/2}, \xi^{b}_{T/2}),(\xi^{B}_{T_{-}},\xi^{B}_{T_{-}}), (\xi^{B}_{T}, \xi^{b}_{T}))$$ with
 state space $\Xi^{B,b}_{A}$.

A set $\dickm \ell \cup \dickm r$ with $\dickm \ell:= \{1,..., n\}$ and $\dickm r:=\{n+1, ..., 2n\}$, $n\in \mathbbm{N}$, is a
 sample at two loci $L$ and $R$ taken from the population at time $t=0$, if
the genealogy of the sample is at time $t=0$ given by $\xi_0= \big(\big\{\{1, n+1\}, \dots, \{n, 2n\}\big\}, \{\emptyset\} \big)$.
\end{defi}

With this we can define an extended star-like genealogy as a four-time step random experiment: 
\begin{defi}\label{stargene}
An extended star-like genealogy of a sample $\dickm \ell \cup \dickm r$ with $\dickm \ell:= \{1,..., n\}$ and $\dickm r:=\{n+1, ..., 2n\}$
 at two loci $L$ and $R$ in the neighborhood of a selected locus $S$
 arranged according to geometry (a) (resp. geometry (b)) is a four-time step Markov chain
 $$(\xi_t)_{t\in\{0,T/2,T_{-},T\}} =((\xi^{B}_{0},\xi^{b}_{0}), (\xi^{B}_{T/2}, \xi^{b}_{T/2}),(\xi^{B}_{T_{-}},\xi^{B}_{T_{-}}), (\xi^{B}_{T}, \xi^{b}_{T}))$$ with
 state space $\Xi^{B,b}_{\dickm \ell \cup \dickm r}$ with the following properties:
\begin{itemize}
\item At time $T/2$:
   \begin{itemize}
    \item Structured partition elements are stochastically independent
    \item No recombination events between the backgrounds and no mutations to the beneficial allele, i.e. $P\big( \xi^{b}_{T/2}=\emptyset\big)=1$
    \item No coalescence events, i.e. $P\big(\xi^{B}_{T/2}  \preccurlyeq \xi^{B}_{0}  \big)=1$  
    \item For geometry (a): No $LR$-recombination events in the beneficial background with probability $p_{LR}$, i.e. for $j \in \dickm l$
          a structured partition element at time $t=0$ of the form $(\{\{j, j+n\}\}, \{\emptyset\})$
          \begin{itemize}
          \item is kept at time $T/2$ with probability $p_{LR}$
          \item  and changed to $(\{\{j\}, \{j+n\}\}, \{\emptyset\})$ with probability $1-p_{LR}$
          \end{itemize}
    \item For geometry (b): Neither a $LS$-recombination events nor a $SR$-recombination events in the beneficial background happens
          with probability $p_{LS}p_{SR}$, i.e. for $j \in \dickm l$
          a structured partition element at time $t=0$ of the form $(\{\{j, j+n\}\}, \{\emptyset\})$
         \begin{itemize}
         \item is kept till time $T/2$ with probability $p_{LS}p_{SR}$
         \item and changed to $(\{\{j\}, \{j+n\}\}, \{\emptyset\})$ with probability $1-p_{LS}p_{SR}$ 
         \end{itemize}       
\end{itemize}
\item At time $T_{-}$:
\begin{itemize}
 \item Structured partition elements are stochastically independent 
 \item No coalescence events, i.e. $P(\xi^{B}_{T_{-}} \cup \xi^{b}_{T_{-}}  \preccurlyeq \xi^{B}_{T/2}  )=1$ 
 \item For geometry (a): For $j \in \dickm l$ 
\begin{itemize} 
\item a partition element at time $T/2$ of the form $(\{\{j, j+n\}\}, \{\emptyset\})$ \\
         is kept at time $T_{-}$ with probability $p_{SL}p_{LR}$, \\
         changed to $( \{\emptyset\}, \{ \{j, j+n\}\})$ with probability $(1-p_{SL})p_{LR}$, \\ 
           changed to $(\{ \{j\}\} , \{\{j+n\}\})$ with probability $p_{SL}(1-p_{LR})$ \\
          and changed to $(\emptyset, \{\{j\}, \{j+n\}\})$ with probability $(1-p_{SL})(1-p_{LR}).$ 
\item a partition element at time $T/2$ of the form $(\{j\}, \{\emptyset\})$ \\
             is kept at time $T_{-}$ with probability $p_{SL}$ \\
      and changed to $(\{\emptyset\}, \{j\})$ with probability $1-p_{SL}.$
\item a partition element at time $T/2$ of the form $(\{j+n\}, \{\emptyset\})$ \\
             is kept at time $T_{-}$ with probability $p_{SR}$ \\
      and changed to $(\{\emptyset\}, \{j+n\})$ with probability $1-p_{SR}.$
\end{itemize}
\item For geometry (b):
\begin{itemize} 
\item A partition element at time $T/2$ of the form $(\{\{j, j+n\}\}, \{\emptyset\})$ \\
         is kept at time $T_{-}$ with probability $p_{LS}p_{SR}$, \\ 
            changed to $(\emptyset, \{\{j\}, \{j+n\}\})$ with probability $(1-p_{LS})(1-p_{SR}).$ \\
            changed to $(\{\{j\}\} , \{\{j+n\}\})$ with probability $p_{SL}(1-p_{SR})$ \\
          and changed to $(\{\{j+n\}\}, \{\{j\}\})$ with probability $(1-p_{SL})p_{SR}.$ 
\item A partition element at time $T/2$ of the form $(\{j\}, \{\emptyset\})$ \\
             is kept at time $T_{-}$ with probability $p_{LS}$ \\
      and changed to $(\{\emptyset\}, \{j\})$ with probability $1-p_{LS}.$
\item A partition element at time $T/2$ of the form $(\{j+n\}, \{\emptyset\})$ \\
             is kept at time $T_{-}$ with probability $p_{SR}$ \\
      and changed to $(\{\emptyset\}, \{j+n\})$ with probability $1-p_{SR}.$
\end{itemize}
\end{itemize} 
\item At time $t=T$: \\
 At the beginning of the sweep all coalescence and mutation events happen: 
Let $ m \in \mathbbm{N}$ and $ a_j\in \{0, ..., m\}$ with $\sum_{j=1}^{m}j a_j=m.$ Denote by 
$M^{m} :=\{(\xi^B, \xi^b) \in \Xi^{B,b}_{\dickm \ell \cup \dickm r } ; |\xi^B|=m \}$ the set of structured partition of $\dickm \ell \cup \dickm r$
which beneficial partitions consist of m elements and by $N^{(a_1, ..., a_m)}:= \{ \eta= (\eta_1,..., \eta_k) \in \Xi_{\{1, ..., m\}}; 
\# \{\eta_l ; |\eta_l|=j\}=a_j \}$ the set of partitions of $\{1, ..., m\}$ containing $a_j$ partition elements of size $j$.
Then for  $\eta\in N^{(a_1, ..., a_m)}$
\begin{equation}\label{probfounder}
P( \xi_T= (\{\emptyset\}, (\eta \circ \xi^{B}) \cup \xi^{b}) | \xi_{T_{-}}=(\xi^{B}, \xi^{b}) \in M^{m}) = \frac{m!}{\theta_s(m)} \prod \limits_{j=1}^{m} \frac{(\theta_s/j)^{a_j}}{a_j!}.
\end{equation}
\end{itemize}
We say, that a population evolved according to an extended star-like genealogy, if the genealogy of each sample of the population is extended star-like. 
\end{defi}

\begin{bem}
\emph{ 
If we are interested in the genealogy of a subset $M$ of a sample $\dickm \ell \cup \dickm r$, the
 genealogy of $M$
 fulfills 
all conditions of Definition \ref{stargene}.
In particular, at time $t=T$ the number of founders together with the number of their offspring is Ewens distributed, since Ewens sampling formula is consistent.
At time $t=0$, the genealogy of $M$ is given by $$\xi_0=  \big(\big\{\{1, n+1\}\cap M, \dots, \{n, 2n\} \cap M\big\}, \{\emptyset\} \big).$$
}
\end{bem}
In accordance to the possible recombination events during the time interval $[T/2,T)$ we obtain the ancestral lines 
shown in Figure \ref{fig:starApproximation3} for geometry (a) and Figure
\ref{fig:starApproximation4} for geometry (b). The probabilities for these events are listed in Table \ref{tab:geoalpha} for geometry
 (a), in Table \ref{tab:geobeta} for geometry (b). For the time interval $[0,T/2)$ the possible ancestral lines are
 shown in Figure \ref{fig:ZeitintervallI3}. In Figure \ref{fig:CoalMut1} the left picture shows two lines which coalesce first and
mutate then, in the right picture the lines mutate first and coalesce afterwards.

\section{Results} 
Our main result is the computation of the linkage disequilibrium at the end of the sweep
 measured by $\mathbbm{E} [D_{\ell, r}(0) |D_{\ell, r}(T)]$ for two fixed allelic variants
$\ell$ and $r$ and $\widehat{\sigma^2_D}$ for two neutral 
loci in a neighborhood of the selected locus (backward in time). 

We apply the procedure of \citet{PfaffelhuberLehnertStephan2008} to compute $\mathbbm{E}[D_{\ell, r}(0) |D_{\ell, r}(T)]$. The main
 difference between our model and the hard sweep model is, that two 
lines do not have to coalesce, since both lines may mutate to the beneficial allele.
\begin{thm}\label{smallthm}
Assume, that the population evolved in a DNA-region containing the two neutral loci $L$ and $R$ and the selected locus $S$ according to an extended star-like
 genealogy and both loci carry exactly two allelic variants $\ell/L$ and $r/R$. Then the linkage 
disequilibrium of the allelic variants $\ell$ and $r$ measured by $\mathbbm{E}[D_{\ell,r}(0)|D_{\ell, r}(T) ]$ at the end of the sweep
is given by
\begin{equation}\label{Geoa}
  \mathbbm{E}[D_{\ell, r}(0) |D_{\ell, r}(T)] = p_{LR}^2(1-\frac{1}{1+\theta_s} p_{SL}^2)D_{\ell,r}(T), 
\end{equation}
if the two neutral loci are arranged according to geometry (a)
and 
\begin{equation}\label{Geob}
 \mathbbm{E}[D_{\ell, r}(0) |D_{\ell, r}(T)] = p_{LR}^2(1-\frac{1}{1+\theta_s})D_{\ell,r}(T),
\end{equation}
for L and R arranged according to geometry (b). 
\end{thm}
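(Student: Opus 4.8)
The plan is to reduce the conditional expectation to two purely genealogical quantities and then read those off the star-like genealogy. Writing $D_{\ell,r}=q_{\ell r}-q_\ell q_r$, I would interpret $q_{\ell r}(0)$ as the probability that \emph{one} sampled individual carries $\ell$ at its $L$-locus and $r$ at its $R$-locus, and $q_\ell(0)q_r(0)$ as the probability that, for \emph{two} independently sampled individuals, the first carries $\ell$ at $L$ and the second carries $r$ at $R$. Tracing the relevant loci back through the extended star-like genealogy of Definition~\ref{stargene} to the founders at time $T$ — and using that neutral mutations during the sweep are ignored, so each locus keeps the type of its time-$T$ ancestral gamete — the key observation is that distinct founders are i.i.d.\ draws from the time-$T$ population. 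Hence two loci that trace back to the \emph{same} time-$T$ ancestor have joint type distributed as $q_{\ell r}(T)$, whereas two loci with distinct (independent) ancestors contribute $q_\ell(T)q_r(T)$.

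Conditioning on the time-$T$ population this gives $\mathbbm{E}[q_{\ell r}(0)]=P_1\,q_{\ell r}(T)+(1-P_1)q_\ell(T)q_r(T)$ and $\mathbbm{E}[q_\ell(0)q_r(0)]=P_2\,q_{\ell r}(T)+(1-P_2)q_\ell(T)q_r(T)$, where $P_1$ is the probability that the $L$- and $R$-locus of one individual share a time-$T$ ancestor and $P_2$ the probability that the $L$-locus of one individual and the $R$-locus of another do. Subtracting, every marginal term cancels and
\[
\mathbbm{E}[D_{\ell,r}(0)\mid D_{\ell,r}(T)]=(P_1-P_2)\,D_{\ell,r}(T),
\]
so it remains only to compute $P_1$ and $P_2$ for each geometry. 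This already explains the structural difference from a hard sweep: the perfect within-founder correlation present in $\mathbbm{E}[q_{\ell r}(0)]$ is exactly cancelled by the same-founder term in $\mathbbm{E}[q_\ell(0)q_r(0)]$, which is where the factor $\tfrac{1}{1+\theta_s}$ will enter.

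For $P_2$ the two loci sit on independent lines, so they can only coalesce by both staying in the beneficial background up to $T_{-}$ and then being assigned the \emph{same} founder. Reading the per-interval ``stay beneficial'' probabilities off Tables~\ref{tab:geoalpha}/\ref{tab:geobeta} ($p_{SL}$ and $p_{SR}=p_{SL}p_{LR}$ for geometry~(a); $p_{LS}$ and $p_{SR}$ for geometry~(b)) and using that two distinct beneficial lines descend from one founder with probability $\tfrac{1}{1+\theta_s}$ — the homozygosity of the Ewens founder distribution~(\ref{Ewenssamp}), equal to $P_{hard,2}$ — gives $P_2$ as a clean product. For $P_1$ the two loci start on the same line, so they share an ancestor either (a)~if no $LR$-recombination ever separates them, which happens with probability $p_{LR}^2$ (one factor per time interval) and forces a common time-$T$ gamete, or (b)~if they are separated but both stay beneficial and land on the same founder.

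The hard part will be the bookkeeping in case~(b) of $P_1$: a separation that leaves \emph{both} lines beneficial can occur only in $[0,T/2)$, since by Definition~\ref{stargene} any $LR$-separation during $[T/2,T)$ throws one locus into the wild-type background and so precludes a shared founder; after such an early separation the two lines must independently remain beneficial through $[T/2,T)$ and finally coalesce at a single founder with probability $\tfrac{1}{1+\theta_s}$. Carrying this through gives $P_1=p_{LR}^2+\tfrac{1}{1+\theta_s}(1-p_{LR})\,p_{SL}^2 p_{LR}$ for geometry~(a) and $P_1=p_{LR}^2+\tfrac{1}{1+\theta_s}(1-p_{LR})\,p_{LR}$ for geometry~(b); subtracting the corresponding $P_2$ collapses the $(1-p_{LR})$ terms and yields the claimed coefficients $p_{LR}^2\bigl(1-\tfrac{1}{1+\theta_s}p_{SL}^2\bigr)$ and $p_{LR}^2\bigl(1-\tfrac{1}{1+\theta_s}\bigr)$. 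The remaining algebra is routine.
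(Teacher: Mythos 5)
Your argument is correct and takes essentially the same route as the paper's own proof: your $P_1$ and $P_2$ are exactly the paper's $d$ and $e$ (probability of a common time-$T$ ancestor for a linked, resp.\ unlinked, pair at time $0$), the cancellation of the marginal terms yielding $(P_1-P_2)D_{\ell,r}(T)$ is the same linear decomposition, and your case analysis over the extended star-like genealogy reproduces $e=\tfrac{1}{1+\theta_s}p_{SL}p_{SR}$ and $d=e(1-p_{LR})+p_{LR}^2$ verbatim, including the key point that an $LR$-split during $[T/2,T)$ expels one locus from the beneficial background and so cannot be healed by a shared founder. Nothing further is needed.
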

\begin{proof}
Indeed, consider the genealogy $(\xi)_{t\in\{0,T/2,T_{-},T\}}$ of an $L$-locus $\{1\}$ and an $R$-locus $\{2\}$, i.e. $\xi_t \in \Xi^{B,b}_{\{1,2\}}$. 
Denote by $d$ the probability, that the pair $\{1\},\{2\}$
 was linked at the beginning of the sweep, if it is linked at the end of the sweep, i.e.~let $d:= P(\xi_T=(\{\emptyset\}, \{1,2\})| \xi_0=\nolinebreak[30] (\{\{1,2\}\}, \{\emptyset\})).$
   Analogously, denote by $e$ the probability, that
the pair has been linked at the beginning, if it is unlinked at the end of the sweep.
 I.e.~$e:= P(\xi_T=(\{\emptyset\},\{\{1,2\}\})| \xi_0=(\{\{1\},\{2\}\},\{\emptyset\}))$. Then we can write  
\begin{equation*}
 \mathbbm{E}[q_{\ell r}(0)| q_{\ell r}(T), q_\ell(T),q_r(T)] = d q_{\ell r}(T) + (1-d)q_\ell(T)q_r(T)
\end{equation*}
\begin{equation*}
 \mathbbm{E}[q_{\ell}(0) q_{r}(0)| q_{\ell r}(T), q_\ell(T), q_r(T)] = e q_{\ell r}(T) + (1-e) q_\ell(T)q_r(T)
\end{equation*}
with $q_{\ell r}\in [0,1]$ and $q_r,q_\ell \in (0,1)$
and so
\begin{equation*}
 \mathbbm{E}[D_{\ell, r}(0) |D_{\ell, r}(T)=x] = (d-e) D_{\ell, r}(T).
\end{equation*}
The probabilities $d$ and $e$ are for geometry (a) and (b) given by
\begin{equation*} a) \ e = \frac{1}{1+ \theta_s} p_{SL}p_{SR} = \frac{1}{1+\theta_s}  p_{SL}^2 p_{LR} \quad  b) \ e= \frac{1}{1+ \theta_s} p_{LS}p_{SR}= \frac{1}{1+ \theta_s} p_{LR}
 \end{equation*}
and 
\begin{equation*} a) \ d = e(1- p_{LR}) + p_{LR}p_{LR} \quad b) \ d= e(1- p_{LR}) + p_{LR}p_{LS}p_{SR} =  p_{LR}.
 \end{equation*} 
In words, a pair is unlinked at the end of the sweep when it was linked at the beginning, iff
 the pair just coalesces, i.e.
 neither a recombination event between the $S$ and the $L$ locus neither 
a recombination event between the $S$ and $R$ locus occurred and the two loci coalesced before they mutated. 
And a pair which is linked at the end of the sweep is also linked in the beginning, iff either nothing happens or the pair is divided by a LR-recombination
event first and then linked again by coalescence event. 

From this easily follows Equation (\ref{Geoa}) for geometry (a) and Equation (\ref{Geob}) for geometry (b).
\end{proof}

To compute the quantity $\widehat{\sigma^2_D}$ consider the three quantities:

\begin{equation} \label{XYZ}
\begin{aligned}
 \mathcal{X}_t & := \mathbbm{E}[q_L(t)(1-q_L(t))q_R(t)(1-q_R(t))] \\ 
 \mathcal{Y}_t &:= \mathbbm{E}[D(t)(1-2q_L(t))(1-2q_R(t))] \\
 \mathcal{Z}_t &:= \mathbbm{E}[(D(t))^2]
\end{aligned}
 \end{equation}
for $0\leq t \leq T$.

\begin{thm}\label{bigthm}
Given $\mathcal{X}_{T}, \mathcal{Y}_{T}$ and $\mathcal{Z}_{T}$ at the beginning of the sweep and a sample of size $n$ of a population at the end of the sweep, i.e.~a set of $L$-loci
$\dickm \ell:=\{1, ...,n\}$ and a set of $R$-loci $\dickm r:= \{n+1, ..., 2n\}$, assume
 that the genealogy of the sample is extended star-like. Then the standard linkage disequilibrium $\widehat{\sigma^2_D}$ of this sample of
 two neutral loci at the end of a sweep equals:
\begin{equation}\label{sigma}
\widehat{\sigma^2_D}= \widehat{\mathcal{Z}_0}/\widehat{\mathcal{X}_0}.
\end{equation}
with
\begin{equation} \label{big1} 
\begin{aligned} 
\widehat{ \mathcal{Z}_0}& =  p_{LR}^4 (p_{SL}-1)^2 \big( p_{SL}^2(\mathcal{X}_T+\mathcal{Y}_T)+(1+2p_{SL})\mathcal{Z}_T \big) \\
& \quad + \theta_s p_{SR}^2 \Big(\frac{\mathcal{X}_T}{3} \big( p_{LR}(11p_{SR}-2-6p_{LR})+p_{SR}(2-4p_{SR}) \big) \\
& \quad + \frac{\mathcal{Y}_T}{12} \big( p_{SR}(2-21p_{SR}) + p_{LR}(38p_{SR}-15p_{LR}-2) \big) + \frac{\mathcal{Z}_T}{3}\big( 9p_{LR}^2-9p_{SR}p_{LR}+p_{SR}^2 \big) \Big)  
\end{aligned}
\end{equation}
and
\begin{equation}\label{big2}
\begin{aligned}
\widehat{ \mathcal{X}_0} & = (1-p_{SR})(p_{SL}-1) \big(\mathcal{X}_T(1+p_{SR}+p_{SL})+(\mathcal{X}_T+\mathcal{Y}_T)(p_{SL}p_{SR})\big) \\
& \quad + \frac{\theta_s p_{SL}^2}{3}\Big(\mathcal{X}_T(3p_{LR}^2-5p_{LR}+3 +2p_{SR}p_{LR}+2p_{SR}-4p_{SR}^2)  \\
& \quad + 5\mathcal{Y}_T(p_{SR}p_{LR}+p_{SR}- \frac{17}{20}p_{LR})+  p_{SR}^2(\mathcal{Z}_T-\frac{21}{4}\mathcal{Y}_T)\Big) 
\end{aligned}
\end{equation}
 for geometry (a)
and with
\begin{equation}\label{big3}
\begin{aligned}
\widehat{\mathcal{Z}_0} & = \theta_s p_{LR}^3 \Big(\frac{\mathcal{X}_T}{3}\big(1-p_{LS}+2p_{LR}-p_{SR}\big) \\
& \quad + \frac{\mathcal{Y}_T}{12} \big(3p_{LR}-p_{SR}-p_{LS}+1 \big)+ \frac{\mathcal{Z}_T}{3}p_{LR} \Big) 
\end{aligned}
\end{equation}
and
\begin{equation}\label{big4}
\begin{aligned}
\widehat{\mathcal{X}_0}&=\mathcal{X}_T \big((1-p_{LS}^2)(1-p_{SR}^2) \big)+ \mathcal{Y}_T \big(p_{LR}(1-p_{LS})(1-p_{SR})\big)\\
& \quad + \theta_s \Big( \frac{\mathcal{X}_T}{3}\big(3p_{LS}^2+2p_{LR}p_{SR}-4p_{LR}^2-5p_{LR}+2p_{LR}p_{LS}+3p_{SR}^2 \big) \\
& \quad + \frac{\mathcal{Y}_T p_{LR}}{12} \big(20 p_{LS}-21p_{LR}-17+20 p_{SR} \big) + \frac{\mathcal{Z}_T}{3}p_{LR}^2 \Big) 
\end{aligned}
\end{equation} 
for geometry (b), 
if we ignore in both geometries terms of order $\theta_s^2$ and $1/n$.
\end{thm}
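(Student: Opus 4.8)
The plan is to extend the first-moment argument of Theorem \ref{smallthm} to the second moments that build $\widehat{\sigma_D^2}$, organising the work so that the whole sweep acts linearly on the triple $(\mathcal{X}_T,\mathcal{Y}_T,\mathcal{Z}_T)$. Write $X_i,Y_i$ for the indicator that the $i$-th sampled gamete carries allele $\ell$ at the $L$-locus, resp.\ $r$ at the $R$-locus. The combinatorial reduction (the ``1-1 correspondence'', step~3 in the paper's numbering) rests on the elementary identities $\widehat{D}=\frac{1}{2n^2}\sum_{i,j}(X_i-X_j)(Y_i-Y_j)$ and $\widehat{q}_{\ell}(1-\widehat{q}_{\ell})=\frac{1}{2n^2}\sum_{i,j}(X_i-X_j)^2$. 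Squaring the first and forming the product for the second, and then discarding all quadruples of indices in which two coincide (these contribute only at order $1/n$), gives
\begin{align*}
\mathbbm{E}[\widehat{D}^2]&\approx\tfrac14\,\mathbbm{E}\big[(X_1-X_2)(Y_1-Y_2)(X_3-X_4)(Y_3-Y_4)\big],\\
\mathbbm{E}\big[\widehat{q}_{\ell}(1-\widehat{q}_{\ell})\widehat{q}_r(1-\widehat{q}_r)\big]&\approx\tfrac14\,\mathbbm{E}\big[(X_1-X_2)^2(Y_3-Y_4)^2\big].
\end{align*}
Since $(X_i-X_j)^2$ is the indicator that gametes $i,j$ differ at $L$, the numerator is, up to signs, the probability of drawing two pairs of gametes each heterozygous at both neutral loci (the ``two heterozygous pairs'' of the introduction), and the denominator the probability of one pair heterozygous at $L$ and an independent pair heterozygous at $R$; these are exactly $\widehat{\mathcal{Z}_0}:=\mathbbm{E}[\widehat{D}^2]$ and $\widehat{\mathcal{X}_0}:=\mathbbm{E}[\widehat{q}_{\ell}(1-\widehat{q}_{\ell})\widehat{q}_r(1-\widehat{q}_r)]$, whence $\widehat{\sigma_D^2}=\widehat{\mathcal{Z}_0}/\widehat{\mathcal{X}_0}$. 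Both are now functionals of the joint law of (at most) four sampled gametes, i.e.\ of the extended star-like genealogy of the $L$- and $R$-loci they carry.

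Next I would compute these two expectations by propagating the four gametes backward through the three layers of Definition \ref{stargene}. On $[0,T/2)$ only recombination inside the beneficial background acts, keeping an $L$-$R$ pair linked with probability $p_{LR}$ in geometry (a), resp.\ $p_{LS}p_{SR}$ in geometry (b); on $[T/2,T)$ the events listed in Tables \ref{tab:geoalpha} and \ref{tab:geobeta} may carry individual loci into the wild-type background (escaping the sweep); and at time $T$ the lineages still in the beneficial background are grouped into founders by the Ewens formula (\ref{probfounder}). Enumerating the outcomes produces, for each geometry, a finite list of configurations recording which of the traced $L$- and $R$-lineages escaped the sweep and which coalesced into a common founder, each weighted by the product of the corresponding transition probabilities.

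For every configuration I would then evaluate the conditional expectation of the indicator product in terms of the ancestral two-locus moments at time $T$. Lineages that coalesce into one founder carry identical alleles, and the alleles of a sampled $L$-locus and $R$-locus are correlated through $D(T)$ precisely when they are co-inherited on the same ancestral haplotype; consequently a configuration with no such co-inherited pair contributes a multiple of $\mathcal{X}_T=\mathbbm{E}[q_L(1-q_L)q_R(1-q_R)]$, one with a single co-inherited pair brings in $\mathcal{Y}_T=\mathbbm{E}[D(1-2q_L)(1-2q_R)]$, and one with two brings in $\mathcal{Z}_T=\mathbbm{E}[D^2]$. Summing the weighted contributions expresses $\widehat{\mathcal{Z}_0}$ and $\widehat{\mathcal{X}_0}$ as linear combinations of $\mathcal{X}_T,\mathcal{Y}_T,\mathcal{Z}_T$; neglecting the $O(1/n)$ sampling corrections and the $O(\theta_s^2)$ configurations with more than one extra mutational origin leaves exactly the coefficients of (\ref{big1})-(\ref{big2}) for geometry (a) and (\ref{big3})-(\ref{big4}) for geometry (b).

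The main obstacle will be the bookkeeping of this enumeration: for four correlated lineages there are many admissible background-and-founder configurations per geometry, and each must be matched both to its genealogical probability and to the right ancestral moment without error. The delicate new feature, absent from the first-moment computation of Theorem \ref{smallthm}, is the cross-moment $\mathcal{Y}_T$; it is generated only by the mixed configurations in which one locus of a pair escapes the sweep while its partner coalesces with a locus of the other pair, and pinning down its rational coefficients (such as the $\tfrac{1}{12}$ and $\tfrac{21}{4}$ factors) is where slips are most likely. A final point to verify is closure: one must check that these genealogical operations never generate two-locus moments beyond $\mathcal{X}_T,\mathcal{Y}_T,\mathcal{Z}_T$ (up to the stated order), so that the system is self-contained and the formulas (\ref{sigma})-(\ref{big4}) follow.
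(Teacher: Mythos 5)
Your proposal is correct and follows essentially the same route as the paper: reduce $\widehat{\sigma^2_D}$ to expectations over four sampled lineages, propagate them backward through the three stages of the extended star-like genealogy (recombination on $[0,T/2)$, background-escape on $[T/2,T)$, Ewens-distributed founders at $T$), and read off each configuration's contribution as a multiple of $\mathcal{X}_T$, $\mathcal{Y}_T$ or $\mathcal{Z}_T$. The paper merely organises the same enumeration through the intermediate basis of joint heterozygosity-and-linkage probabilities $(f_t,g_t,h_t)$, so that each stage becomes an explicit $3\times 3$ matrix ($E$, $F$, $C$, $A$ or $B$) and the final coefficients come from the matrix product $E\cdot F\cdot C\cdot A\cdot E^{-1}$ evaluated symbolically.
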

For a proof of this theorem see Section \ref{sec:proof}.

\begin{bem}\label{bemtheo}
\emph{
\begin{itemize}
 \item In the supporting online material you find a Mathematica-notebook for computing the exact values of the standard linkage disequilibrium
measured by $\widehat{\sigma^2_D}$ without ignoring terms of order $\theta_s^2$ and $1/n$.
 \item If the population evolves neutrally till the beginning of the sweep, \citet{Ohta1969} have shown, that
\begin{align*}\label{XYZ_0}
\mathcal{X}_T & = \frac{1}{4} \frac{\theta^2}{\theta+1} \cdot  
  \frac{5+2\theta+\rho_{LR})(3+2\theta+2\rho_{LR})-4}{(1+\theta)(3+2\theta +2 \rho_{LR})(5+ 2\theta +\rho_{LR})- 2(3+ 2\theta)} \\
\mathcal{Y}_T & =  \frac{\theta^2}{\theta+1} \cdot 
 \frac{1}{(1+\theta)(3+2\theta +2 \rho_{LR})(5+ 2\theta +\rho_{LR})- 2(3+ 2\theta)} \\
\mathcal{Z}_T & = \frac{1}{4} \frac{\theta^2}{\theta+1} \cdot 
 \frac{2\theta +\rho_{LR} +5}{(1+\theta)(3+2\theta +2 \rho_{LR})(5+ 2\theta +\rho_{LR})- 2(3+ 2\theta)},  
\end{align*}
where $\theta:= 4Nu$ is the neutral mutation rate. For a comparison of the theoretical results with simulations
 we assume that the population evolved neutrally till the beginning of the sweep.
\item See Figure \ref{fig:Theo} for a plot of the theoretical values of $\widehat{\sigma^2_D}$ for different values of $\theta_s$.   
Here we assumed as well neutral evolution till the beginning of the sweep. 
 \item Note, that for $\theta_s=0$ we obtain $\widehat{\sigma^2_D}$ for a hard sweep, compare \citep{PfaffelhuberLehnertStephan2008}.
\end{itemize}
}
\end{bem}

\begin{figure}
 \begin{center}
\includegraphics[scale=0.35]{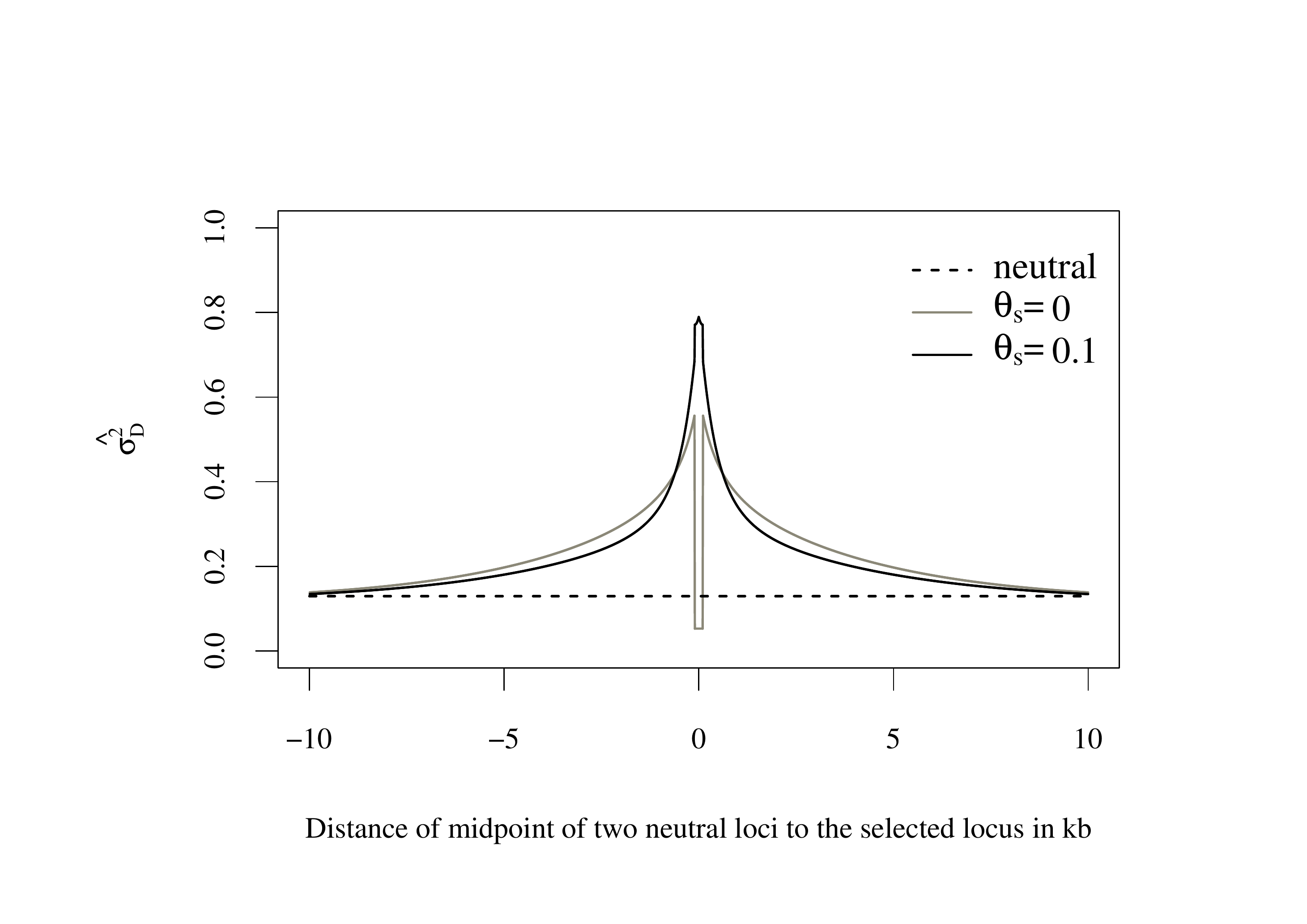} \end{center}
  \caption{\label{fig:Theo}
Theoretical values of $\widehat{\sigma^2_D}$ in the neutral setting, for $\theta_s=0$ and $\theta_s=0.1$.
 The distance between the neutral loci is $0.2$ kb, the selection strength $\alpha=1000$, the population size $N=10^6$, 
the recombination rate between the neutral loci $\rho_{LR}= 5$ and the neutral mutation rate $\theta=0.005$.
}
\end{figure}

\section{Simulations}\label{sim}
\begin{figure}
\subfigure{
\includegraphics[scale=0.35]{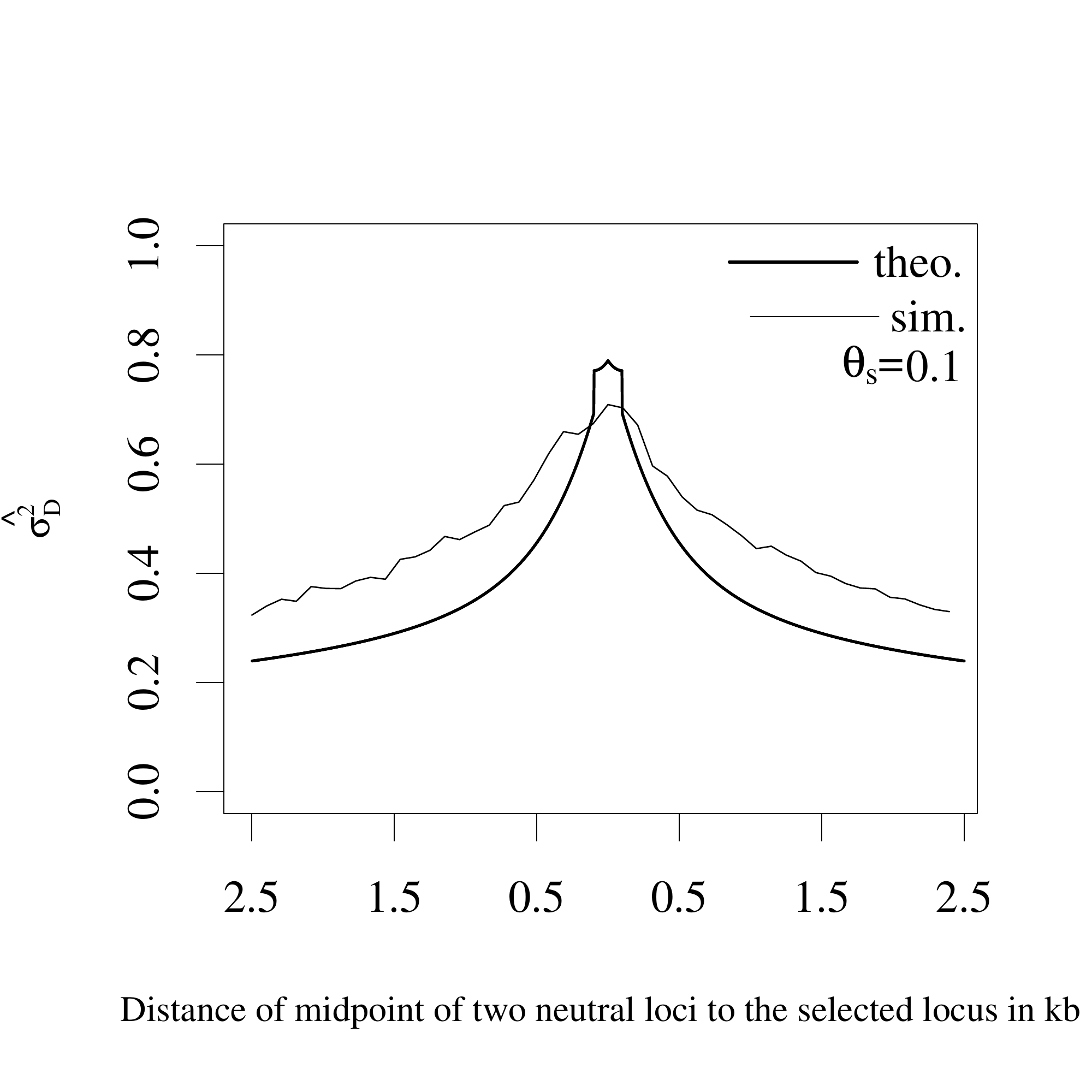}   
}
\subfigure{
\includegraphics[scale=0.35]{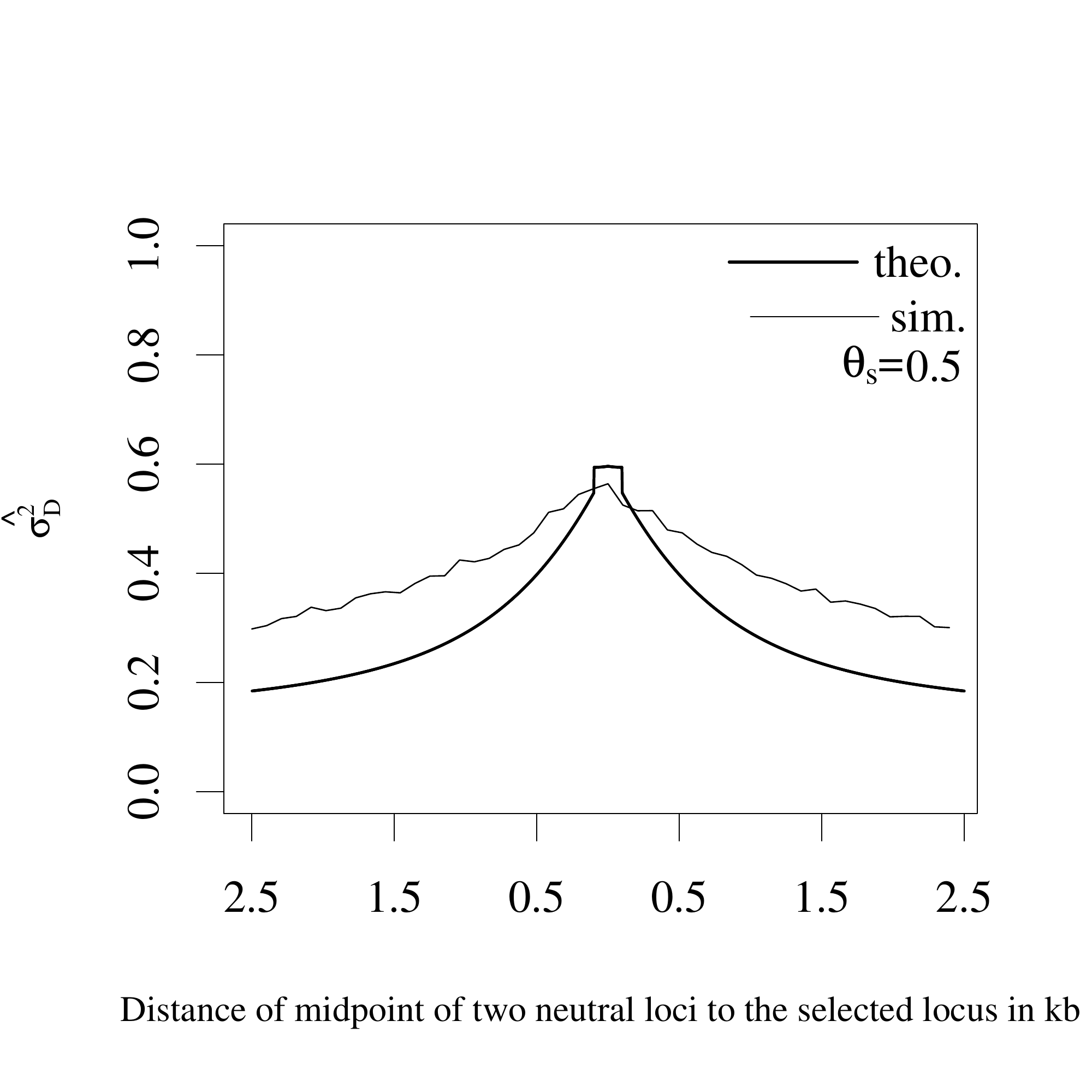}  
}
 \caption{Left figure: 
Plot of $\widehat{\sigma^2_D}$ for a neutral mutation rate $\theta= 0.005$, recombination rate $\rho= 0.025$, selection strength $\alpha=1000$, 
recurrent mutation rate to the beneficial allele $\theta_s=0.1$, a distance of 200 bp between the two neutral loci and a DNA-stretch of 
length 5 kb based on $10^4$ draws. Right figure:  
Plot of $\widehat{\sigma^2_D}$ with the same parameters as in the left figure except for the 
recurrent mutation rate to the beneficial allele $\theta_s=0.5$. 
}
\label{fig:Plot1}

\end{figure}

\begin{figure}
 \begin{center}
\includegraphics[scale=0.5]{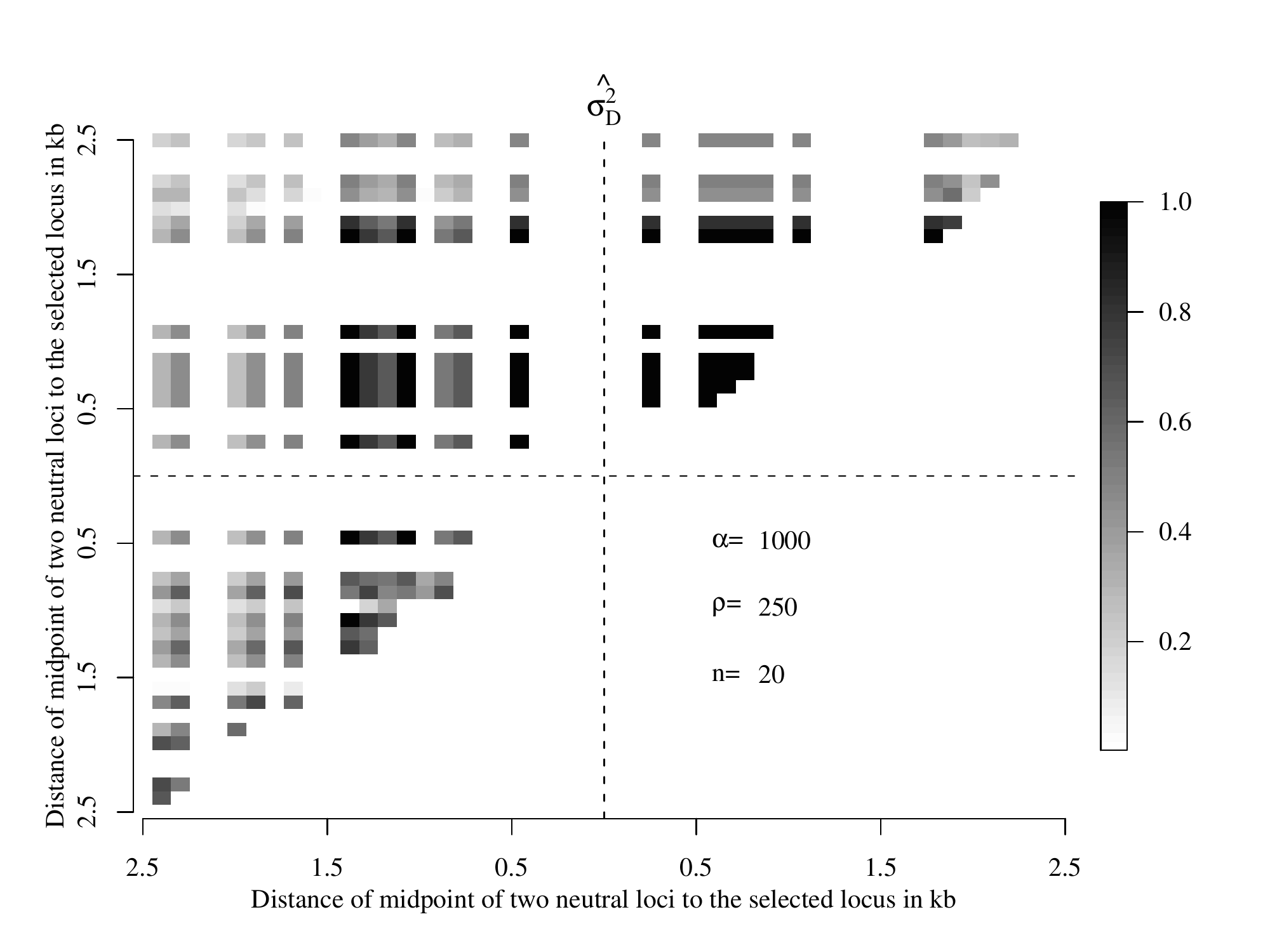} \end{center}
  \caption{\label{fig:Singlesample}
The full linkage disequilibrium spectrum for a single sample of a soft sweep with two founders with respect to the beneficial locus.
}
\end{figure}

We simulated sequence samples with the new program msms (for ms mit Selektion (German: with selection)) of Greg Ewing, see \citep{Ewing} to compare
our theoretical linkage disequilibrium values with linkage disequilibrium values obtained from simulated genealogies assuming neutral evolution till
 the beginning of the sweep.
The program msms generates sequence samples for a single selected locus of a population reproducing according to the Wright-Fisher-model 
 with the possibility of recurrent mutation to the beneficial allele. The frequency of the beneficial
allele is simulated stochastically conditioned on fixation. 
In Section 2 we argued, that the star-like genealogies approximates the Moran model genealogies well.
For large population sizes the Moran model and Wright-Fisher model deliver similar genealogies, if the parameters are appropriately scaled.
So instead of comparing the theoretical results with results obtained from Moran model simulations, we can check the theoretical results against results
obtained from Wright-Fisher model simulations. 

We consider a 5-kb stretch of DNA in a sample of $n=20$ taken at time of fixation 
of the beneficial allele.
 We divide the stretch into 50 bins, each of length 0.1kb and measure LD between SNPs of two different bins averaged over $10^4$ draws.
Figure \ref{fig:Plot1} shows the results for a recurrent mutation rate $\theta_s=0.1$ and $\theta_s=0.5$, respectively.  
The neutral mutation rate $\theta=0.005$, the recombination rate $\rho= 0.025 $ for two neighboring loci, the distance between neighboring neutral loci $L$ and $R$ is 200bp, 
the selection strength $\alpha=1000$ and the population size $N=10^6$ in both plots. These parameter values are realistic for example for Drosophila melanogaster
samples.

As we see in Figure \ref{fig:Plot1} there is a good fit between simulated and theoretical values of $\widehat{\sigma^2_D}$.
For small sample sizes the extended star-like genealogies approximate the simulated Wright-Fisher-genealogies well. The linkage disequilibrium 
is for theoretical and simulated values high, if $\theta_s \neq0$ and the distance to the selected locus is small, independently of 
the geometry of the considered neutral loci.
Due to recombination linkage disequilibrium decreases with increasing distance to the selected locus. 

The differences between the theoretical and simulated values of $\widehat{\sigma^2_D}$ are due to the approximation of the genealogy. 
The approach has three effects on LD. 

First, a star-like genealogy assumes independent recombinants. But of course in the simulated 
genealogies may also occur coalescence events before recombination events, in particular 
may arise early recombinants (see \citep{DurrettSchweinsberg2004a} or for slightly different
 models \citep{EtheridgePfaffelhuberWakolbinger2006}). 
 For geometry (a) it is important, that recombinants with
 offspring lead to less ``independent'' variation, which can be seen in higher LD values of the simulated data.
For geometry (b) early recombinants become noticeable, because they produce patterns similar to soft sweep patterns: 
Recombinants spread through the populations act as additive founders of the sweep.
For this reason our approximation of the genealogy assumes less founders of the sweep than the simulated genealogies have.
 Therefore the LD-patterns of simulated data should look like the LD-pattern of the theoretical values with a slightly
higher $\theta_s$ value. Higher $\theta_s$ values produce in geometry (b) less
$\widehat{\sigma^2_D}$, compare the pictures in Figure \ref{fig:Plot1}.
This effect becomes negligible for increasing $\theta_s$.    
On the one hand the fixation of the beneficial allele gets faster, on the other hand for intermediate and high values of $\theta$
also extended star-like genealogies assume in average more than two founders. By measuring linkage disequilibrium one can distinguish well
between the existence of one or two founders of the sweep, but not between the existence of three or four founders.
 
Second, the star-like approximation of the genealogy is in general longer 
than the simulated genealogy, since the beneficial allele spreads faster through the population, if the lines are dependent.
 Therefore, more recombination events are assumed to fall on the theoretical genealogies than on the simulated genealogies.
For loci in a small neighborhood of the selected locus this means, that more SNPs can be found for the
star-like genealogy due to recombination. Third, SNPs of simulated data are noisier, they may exist also due to neutral mutation during the sweep.
Both effects can in geometry (a) be recognized by higher theoretical LD values in a small neighborhood of the selected locus. But at a certain point, 
the effect turns over: More recombination brings more ''independent`` variation into the sample: The theoretical values of LD in geometry (a) lie below
 the simulated values.

Often one is interested in the case of a single sample. We simulated a single sample of a soft sweep with two founders and computed linkage 
disequilibrium with that data. The result is plotted in Figure \ref{fig:Singlesample}. In that case the pattern is very clear.
In the neighborhood of the selected locus, high linkage disequilibrium can be found independent of the geometry of the loci.
However, such clear patterns cannot be expected in general, even if the sweep has two founders. It is likely, that the number of offspring is not
 distributed equally between the founders. For example it may happen, that in a sample of 20 individuals with respect
to the beneficial allele 2 individuals are offspring of one founder and the remaining 18 individuals are offspring of the second founder. 
For such unbalanced cases stochastic effects caused by recombination and mutation destroy the pattern easily. 

\section{Discussion}
Soft sweeps have been introduced by Pennings and Hermisson in their series of papers \citep{HermissonPennings2005}, \citep{PenningsHermisson2006a}, \citep{PenningsHermisson2006b}.
They argued, that tests based on haplotype structure have high power to detect soft sweeps.
Linkage disequilibrium is a test sensitive to haplotype structure. If allelic variants are tightly linked to a haplotype, LD
 is high for pairs of such alleles.  
We have seen, that linkage disequilibrium under a non vanishing recurrent mutation rate differs sufficiently
 from linkage disequilibrium under neutrality and hard sweeps, see Figure \ref{fig:Theo}.  

We computed $\widehat{\sigma^2_D}$ to understand
the interplay of haplotype formation due to a soft sweep and recombination.
The main reason to compute $\widehat{\sigma^2_D}$ instead of $\mathbbm{E}[r^2]$ is its mathematical manageability. However, former studies show
 (and the present study
 does that also), that also $\widehat{\sigma^2_D}$ measures what intuitively is understood under linkage disequilibrium and gives a possibility
to distinguish between different population genetics
 scenarios. 
 
When a soft sweep occurs, recombination breaks up the linkage of loci due to haplotype structure. Under hard sweeps recombination causes linkage of
loci lying on one side of the selected locus. In Figure \ref{fig:Theo} theoretical values of $\widehat{\sigma^2_D}$ are plotted
for different values of $\theta_s$ and under neutrality. 
The behavior can be explained roughly in the following manner:

For small values $\theta_s$ we see for both geometries high values of $\widehat{\sigma^2_D}$
 in a small neighborhood of the selected locus decreasing with increasing distance to the selected locus. 
If $\theta_s$ is relatively small, \citet{PenningsHermisson2006a} have shown, that soft sweeps are not very likely, most sweeps will be hard.
LD of hard sweeps depends on 
recombination. Only recombination brings variation into the sample which is necessary to compute linkage disequilibrium. 

After a hard sweep we can see the following pattern of LD due to recombination.
Recombination between the $L$-locus and the $S$-locus includes for geometry (a)
 always a recombination between the $R$-locus and $S$-locus or between the $L$-locus and the $R$-locus, i.e. 
the $L$-locus recombines not independently of the 
$R$-locus. Therefore LD is high for geometry (a) for a hard sweep. 
In geometry (b) an $LS$-recombination event does not cause a $SR$-recombination event and vice versa. So with respect to recombination
 the $L$-locus is
 independent of the $R$-locus. Hence $\widehat{\sigma^2_D}$ is small. If the sample is not finite, $\widehat{\sigma^2_D}$ is
 even zero, see Remark \ref{bemtheo}. 

If a soft sweep occurred, different founders of the sweep bring the variation into the sample - recombination is not necessary.
 If there are exactly
two founders and there exist loci with two allelic variants, such that one allelic variant is carried by one haplotype and the other allele by the other haplotype,
 two of such loci are tightly linked, only recombination can break up this linkage.
 Therefore after a soft sweep with only a few number of founders LD is high in a small neighborhood of the selected locus,
independent of the geometry. But the more founders the soft sweep has, the more variation is in the sample not linked to single founder. This reduces
LD.  

For $r^2$ we expect for very small values of $\theta_s$ patterns of LD similar to hard sweeps, because for very small
values of $\theta_s$ soft sweeps are rare. But $\widehat{\sigma^2_D}$ shows even for very small values of $\theta_s$ high values in a small neighborhood 
of the selected locus. 
This comes from the fact, that small values of $D^2$ expected after a hard sweep in geometry (b) have a smaller effect on the numerator of $\widehat{\sigma^2_D}$ than 
 higher values of $D^2$ expected after a soft sweep in geometry (b). An analogous statement holds for the denominator of $\widehat{\sigma^2_D}$.

For biological studies often the pattern of a single selective sweep is of interest.
After a soft sweep we expect to find high LD of two neutral loci lying in a neighborhood
 of the selected locus, but almost neutral variation. It can be found haplotype structure, where each founder of the sweep gives rise to
 one haplotype.
 In each haplotype group a hard sweep occurred, i.e. almost no variation
can be found, low LD for neutral loci lying on different sides of the selected locus and high LD for loci lying on the same site of the
 selected locus.
In Figure \ref{fig:Singlesample} simulation results of a single sample of a soft sweep with two ancestors with
 respect to the selected locus are shown. As well as \citet{TishkoffEtAl2007} found a comparable clear linkage disequilibrium pattern of a soft sweep in 
 their studies of the human DNA when analyzing the human lactase persistence in African and European human populations.

An adaptation process may not only be initiated by mutation, but also through recurrent migration or from standing genetic 
variation during an environmental change. A two-island model with the beneficial allele fixed in one of the islands and migration
 from this island to the 
other coincides with our model for recurrent mutation.
A more realistic model assumes, that the beneficial allele is not fixed in both islands and that the allelic
frequencies $q_L$, $q_{R}$, etc. do not coincide on both islands. Unfortunately such (simple) modifications make the
 calculations in the proof of Theorem \ref{bigthm}, especially of matrix $A$ and $B$, 
quite complicated.

An improvement of the results could be made by approximating the genealogy not by a star-like approximation but by a marked Yule process
 with immigration. It has been shown by \citet{HermissonPfaffelhuber2008}, that the joint genealogy of the population is better approximated 
by these processes. However, explicit calculations
become with this approximation complicated, since recombination is not independent along lines during the sweep.

\section{Proof of Theorem \ref{bigthm}}\label{sec:proof} 
We proceed in five steps. The quantities $\mathcal{X}_t,\mathcal{Y}_t, \mathcal{Z}_t$ can be expressed
in pairwise heterozygosities. In step 1 we will give this connection. In step 2 we show, how pairwise heterzygosities are transformed to sample heterozygosities.
In step 3 and 4 we show, how pairwise heterozygosities at time $t=T$ are transformed to pairwise heterozygosities at time $t=0$. In step 5 we collect everything together. 

 \medskip

\noindent$\textrm{\textbf{ Step 1}}$:  Link between the pairwise heterozygosities $f_t,g_t,h_t$ and $\mathcal{X}_t$, $\mathcal{Y}_t$, $\mathcal{Z}_t$ \\

\medskip

The quantities $\mathcal{X}_t,\mathcal{Y}_t, \mathcal{Z}_t$ can be expressed in terms of probabilities for pairwise heterozygosities.

 Denote for this purpose by $f_t$ the probability that two pairs heterozygous in both loci are linked, by $g_t$ the probability, that
 exactly one pair of the two pairs
 is linked and
 the other pair is unlinked and by $h_t$ the probability that both pairs are unlinked at time $t$. We can express these probabilities in terms of 
structured partitions: Let $\ell_1, \ell_2 $ be two $L$-loci and $r_1, r_2$ be two $R$-loci taken from the population. Let $\xi_t= (\xi^B_t, \xi^b_t)$ be the genealogy of 
$\{\ell_1, \ell_2, r_1,r_2\}$ at time $t$, then 
$$f_t = P(\xi_t \textrm{ is heterozygous and } \xi^B_t \cup \xi^b_t=\{\{\ell_1,r_1\}, \{\ell_2,r_2\}\}),$$ 
$$g_t = P(\xi_t\textrm{ is heterozygous and } \xi^B_t \cup \xi^b_t=\{\{\ell_1,r_1\}, \{\ell_2\}, \{r_2\}\}),$$
$$h_t = P(\xi_t\textrm{ is heterozygous and } \xi^B_t \cup \xi^b_t=\{\{\ell_1\},\{r_1\}, \{\ell_2\},\{r_2\}\})$$ 

From an easy calculation (see for details also \citep{PfaffelhuberLehnertStephan2008}, Equation (A3)) it follows, that

$$\left(\begin{array}{cc}
	\mathcal{X}_t \\
	\mathcal{Y}_t \\
	\mathcal{Z}_t
	\end{array}\right) =  \frac{1}{4} \underbrace{ \left( \begin{array}{clcr}
               0 & 0    & 1      \\
               0  & 4  &  -4   \\
               1  & -2 & 1
          \end{array}   \right)}_{=:E}   \left(\begin{array}{cc}
	f_t \\
	g_t \\
	h_t
	\end{array}\right)  $$ 

\medskip

\noindent$\textrm{\textbf{ Step 2}}$: Link between pairwise heterozygosities $f,g,h$ and sample
 heterozygosities $\widehat{f}$, $\widehat{g}$, $\widehat{h}$ \\

\medskip
 Denote
 by $\widehat{f}_t$, $\widehat{g}_t$
 and $\widehat{h}_t$ the corresponding sample probabilities, i.e. $\ell_1, \ell_2 \in \dickm \ell$ and $r_1, r_2 \in \dickm r$. It is possible to pick the same individual twice in a sample. Therefore the following relationships hold:    
\begin{align*}
\widehat{f}_t & = \left( 1-\frac{1}{n} \right) f_t \\
\widehat{g}_t & = \left( 1-\frac{1}{n} \right) \left( 1-\frac{2}{n} \right) g_t+ \left( 1-\frac{1}{n} \right) \frac{1}{n} f_t \\
\widehat{h}_t & = \left( 1-\frac{1}{n} \right) \left( 1-\frac{2}{n} \right) \left( 1-\frac{3}{n} \right) h_t+
 \left( 1-\frac{1}{n} \right) \frac{4}{n} \left( 1-\frac{2}{n} \right) g_t + \left( 1-\frac{1}{n} \right) \frac{2}{n}\frac{1}{n}f_t. 
\end{align*}
Denoting 
$$F:=  I + \frac{1}{n}  \left( \begin{array}{clcr}
               -1 & 0    & 0      \\
               1  & -3   & 0   \\
               0  & 4 & -6
          \end{array}   \right)   + \frac{1}{n^2}  \left( \begin{array}{clcr}
               0 & 0    & 0      \\
               -1  & 2   & 0   \\
               2  & -12 & 11
          \end{array}   \right)  + \frac{1}{n^3}  \left( \begin{array}{clcr}
               0 & 0    & 0      \\
               0  & 0   & 0   \\
               -2  & 8 & -6
          \end{array}   \right) ,$$

this is equivalent to 
$\left(\begin{array}{cc}
	\widehat{f_t} \\
	\widehat{g_t} \\
	\widehat{h_t}
	\end{array}\right) = F  \left(\begin{array}{cc}
	f_t \\
	g_t \\
	h_t
	\end{array}\right)$

For example, two linked pairs of one allele at the $L$- and one allele at the
$R$-locus each taken at random (with replacement) from a sample are
heterozygous, if we did not pick the same individual twice and the
resulting two different lines are heterozygous at both loci.

\medskip

In the next two steps we compute how to find $f_0,g_0$ and $h_0$ given $f_T,g_T$ and $h_T$, respectively.\\

\medskip 

\noindent$\textrm{\textbf{ Step 3}}$ : From $f_{T/2},g_{T/2}$ and $h_{T/2}$ to $f_0,g_0$ and $h_0$, respectively \\

\medskip

For both geometries we have
 $$\left(\begin{array}{cc}
	f_0 \\
	g_0 \\
	h_0
	\end{array}\right) = C  \left(\begin{array}{cc}
	f_{T/2} \\
	g_{T/2} \\
	h_{T/2}
	\end{array}\right)$$ with
$$C =  \left( \begin{array}{clcr}
               {p_{LR}}^2 & 2p_{LR}(1-p_{LR})    & (1-p_{LR})^2      \\
               0  & p_{LR}   & 1-p_{LR}   \\
               0  & 0 & 1
          \end{array}   \right)$$

Our model assumptions coincide with the model assumptions of \citet{PfaffelhuberLehnertStephan2008} in the time interval $[T,T/2)$, so that we obtain the same results here. \\

\medskip

\noindent$\textrm{\textbf{ Step 4}}$ : From $f_{T},g_{T}$ and $h_{T}$ to  $f_{T/2},g_{T/2}$ and $h_{T/2}$, respectively\\

\medskip

For this time step it is important to note, that it has to be paid attention not only on the two neutral loci, but also
on the selected locus. We use Ewens sampling formula (see Equation \ref{probfounder}) to compute the probabilities, if the ancestral lines of the pairs
 share with respect to the selected locus a 
common ancestors or different ancestors. With this we get the following relationships:

 $$\left(\begin{array}{cc}
	f_{T/2} \\
	g_{T/2} \\
	h_{T/2}
	\end{array}\right) = A  \left(\begin{array}{cc}
	f_{T} \\
	g_{T} \\
	h_{T}
	\end{array}\right) \qquad \textrm{ for geometry (a) } $$
and 

 $$\left(\begin{array}{cc}
	f_{T/2} \\
	g_{T/2} \\
	h_{T/2}
	\end{array}\right) = B  \left(\begin{array}{cc}
	f_{T} \\
	g_{T} \\
	h_{T}
	\end{array}\right) \qquad
\textrm{ for geometry (b) }$$

with matrix $$A= (a_{ij})_{1\leq i,j\leq 3}$$  given by

\begin{align*}a_{11} & = \frac{\theta_s}{\theta_s +1} p_{LR}^2 + \frac{1}{\theta_s +1}(1-p_{SL}^2)p_{LR}^2  \\
a_{12} & = \frac{\theta_s}{\theta_s +1} 2p_{LR}(1-p_{LR})+ \frac{1}{\theta_s +1} 2 (1-p_{SL}^2)p_{LR} (1-p_{LR}) \\
a_{13} & = \frac{\theta_s}{\theta_s +1} (1-p_{LR})^2  + \frac{1}{\theta_s +1} (1-p_{SL}^2)(1-p_{LR})^2\\
a_{21} & = \frac{\theta_s}{(\theta_s+1)(\theta_s+2)} p_{SL}p_{SR}p_{LR} + \frac{2}{(\theta_s+1)(\theta_s+2)} (1-p_{SL}) p_{SL}^2 p_{LR}^2 \\
a_{22} & = \frac{{\theta_s}^2}{(\theta_s +1)(\theta_s +2)} p_{LR} + \frac{\theta_s}{(\theta_s +1)(\theta_s+2)}(3p_{LR} - 3 p_{SR} p_{SR}) + \\
&\quad + \frac{2}{(\theta_s+2)(\theta_s+1)} p_{LR}(1-p_{SL})(1+p_{SL}-4p_{SL}p_{SR} + 2p_{SL}^2) \\
a_{23} & = \frac{\theta_s^2}{(\theta_s +1)(\theta_s +2)}(1-p_{LR}) + \\
& \quad + \frac{\theta_s}{(\theta_s+1)(\theta_s+2)} (1-p_{LR})(1-p_{SL}^2 + 1- p_{SL}p_{SR}+ 1- p_{SL}p_{SR}) +\\
& \quad + \frac{2}{(\theta_s+1)(\theta_s+2)} (1-p_{SL})(1-p_{LR})(1+p_{SL}- 2p_{SL}p_{SR}) \\
a_{31} & =  \frac{2\theta_s}{(\theta_s+1)(\theta_s+2)(\theta_s+3)} p_{SL}^2 p_{SR}^2  \\
a_{32} & = \frac{ 4 \theta_s^2}{(\theta_s+1)(\theta_s+2)(\theta_s+3)} p_{SL}p_{SR}  + \\
& \quad + \frac{2\theta_s}{(\theta_s+1)(\theta_s+2)(\theta_s+3)}(p_{SR}p_{SL}(1-p_{SR}p_{SL})) + \\
& \quad + \frac{4 \theta_s}{(\theta_s+1)(\theta_s+2)(\theta_s+3)} p_{SL}p_{SR}(2-p_{SL}-p_{SR})+ \\
& \quad + \frac{3!}{(\theta_s+1)(\theta_s+2)(\theta_s+3)} (4 (1-p_{SL})(p_{SL}(1-p_{SR})p_{SR}))\\
a_{33} & =  \frac{ \theta_s^3 }{(\theta_s + 1)(\theta_s + 2)(\theta_s +3)} + \\ 
& \quad + \frac{\theta_s^2}{(\theta_s + 1)(\theta_s + 2)(\theta_s + 3)} (4(1 - p_{SL}p_{SR}) + (1 - p_{SL}p_{SL}) + (1 - p_{SR}p_{SR})) + \\
          & \quad + \frac{4\theta_s}{(\theta_s + 1)(\theta_s + 2)(\theta_s + 3)}((1 -p_{SR})(1-p_{SL})(1-p_{SR})+ (1 - p_{SR})(1 -p_{SR})p_{SL}) + \\
& \quad + \frac{4\theta_s}{(\theta_s + 1)(\theta_s + 2)(\theta_s + 3)}( 2(1-p_{SR})(1-p_{SL})p_{SR} + (1-p_{SL})(1-p_{SL})p_{SR})+\\
&\quad  + \frac{4\theta_s}{(\theta_s + 1)(\theta_s + 2)(\theta_s + 3)} ( 2(1-p_{SR})p_{SL}(1-p_{SL}) + (1-p_{SL})(1-p_{SL})(1-p_{SR})) + \\
& \quad + \frac{\theta_s}{(\theta_s + 1)(\theta_s + 2)(\theta_s + 3)}(2(1 - p_{SL}p_{SR})(1 - p_{SL}p_{SR}) +  (1 - p_{SL}p_{SL})(1 - p_{SR}p_{SR}) ) + \\
& \quad +  \frac{6}{(\theta_s + 1)(\theta_s + 2)(\theta_s + 3)}((1 - p_{SL})(1 - p_{SR})(1 +  p_{SL} + p_{SR} - 3p_{SL}p_{SR}))
\end{align*}

and matrix $$B= (b_{ij})_{1\leq i,j \leq 3}$$ given by 

\begin{align*}
b_{11} & =  \frac{\theta_s}{\theta_s+1} p_{LS}^2 p_{SR}^2 \\
b_{12} & =\frac{2\theta_s}{\theta_s+1} p_{SR} p_{LS}(1-p_{LS}p_{SR}) + \frac{4}{\theta_s+1} p_{LS}(1-p_{LS})p_{SR} (1-p_{SR}) \\ 
b_{13} & =  \frac{\theta_s}{\theta_s+1} (1-p_{LS}p_{SR})^2 + \\
& \quad + \frac{1}{\theta_s+1} (1-p_{LS})(1-p_{SR})(2p_{LS}(1-p_{SR})+2(1-p_{LS})p_{SR} +(1-p_{LS})(1-p_{SR})) \\
b_{21} & = \frac{\theta_s}{(\theta_s+1)(\theta_s+2)}p_{LS}^2p_{SR}^2 \\
b_{22} & =  \frac{\theta_s^2}{(\theta_s+1)(\theta_s+2)} p_{LS} p_{SR} + \\
& \quad + \frac{\theta_s}{(\theta_s+1)(\theta_s+2)}(p_{LS}p_{SR} (1-p_{LS}p_{SR})  +  (2-p_{SR} -p_{LS})p_{SR}p_{LS} + p_{SR} p_{LS}(1-p_{LS}p_{SR})) + \\
& \quad + \frac{8}{(\theta_s+1)(\theta_s+2)} p_{LS}(1-p_{LS})p_{SR} (1-p_{SR})\\
b_{23} & =  \frac{\theta_s^2}{(\theta_s+1)(\theta_s+2)}(1-p_{SR}p_{LS}) + \\
& \quad + \frac{\theta_s}{(\theta_s+1)(\theta_s+2)}((1-p_{LS})^2(1-p_{SR}) + 2p_{LS}(1-p_{LS})(1-p_{SR})) + \\
& \quad + \frac{\theta_s}{(\theta_s+1)(\theta_s+2)}( p_{SR}(1-p_{LS})^2 + (1-p_{LS}p_{SR})^2 + (1-p_{SR})^2(1-p_{LS})) + \\
& \quad + \frac{\theta_s}{(\theta_s+1)(\theta_s+2)}(  2 p_{SR}(1-p_{SR})(1-p_{LS})+ p_{LS}(1-p_{SR})^2) \\
& \quad + \frac{2}{(\theta_s+1)(\theta_s+2)} (1-p_{LS})(1-p_{SR})(2p_{LS}(1-p_{SR}))\\
& \quad + \frac{2}{(\theta_s+1)(\theta_s+2)}(2(1-p_{LS})p_{SR} +(1-p_{LS})(1-p_{SR}))\\
b_{31} & = a_{31} \qquad \textrm{ with } p_{LS} \textrm{ instead of } p_{SL}  \\
b_{32} & = a_{32} \qquad \textrm{ with } p_{LS} \textrm{ instead of } p_{SL}  \\
b_{33} & = a_{33} \qquad \textrm{ with } p_{LS} \textrm{ instead of } p_{SL} 
\end{align*}

To see the above equations, consider for example in geometry (a) the term $a_{21}$:
 In this case at time $T/2$ there are two pairs, which are heterozygous in both
loci and exactly one of the pairs is linked. If the two pairs have two different ancestors
 with respect to the selected locus neither a $SL$-recombination nor a $SR$-recombination must happen for the unlinked pair, nor
 a $LR$-recombination event to the linked pair. Therefore the probability to stay linked also at the beginning of the sweep is 
 $\frac{\theta_s}{(\theta_s+1)(\theta_s+2)} p_{SL}p_{SR}p_{LR}$ using Ewens sampling
 formula.
 If the two pairs have a single ancestor, the linked pair has to change backgrounds, 
i.e. a $SL$-recombination event has to take place. Therefore we obtain in this case the probability $\frac{2}{(\theta_s+1)(\theta_s+2)} (1-p_{SL}) p_{SL}^2 p_{LR}^2$.
 The sum of these two probabilities gives $a_{21}$. 
The other terms can be explained in an analogous manner.\\ 

\medskip

\noindent$\textrm{\textbf{ Step 5}}$ : Collecting all together

\medskip 

We have

$$(\widehat{\mathcal{X}_0}, \widehat{\mathcal{Y}_0}, \widehat{\mathcal{Z}_0})^{T} =E \cdot  F \cdot C\cdot A \cdot E^{-1} (\mathcal{X}_T, \mathcal{Y}_T, \mathcal{Z}_T)^{T}$$ 
for geometry (a) and 

$$(\widehat{\mathcal{X}_0}, \widehat{\mathcal{Y}_0},\widehat{ \mathcal{Z}_0})^{T} = E \cdot F \cdot C\cdot B \cdot E^{-1}
 (\mathcal{X}_T, \mathcal{Y}_T, \mathcal{Z}_T)^{T} $$
for geometry (b). 

With this we can compute $\widehat{\sigma^2_D} = \widehat{\mathcal{Z}_0}/ \widehat{\mathcal{X}_0}$ (recalling
 Equation (\ref{sigma})).

A calculation with Mathematica gives Equations (\ref{big1})-(\ref{big4}), if terms of order $\theta_s^2$ and $1/n$ are ignored. 
 $\square$

\subsubsection*{Acknowledgement}
I am grateful to Peter Pfaffelhuber for many fruitful discussions.
 Many thanks to Greg Ewing for providing and helping me with msms and to Franz Baumdicker and Joachim Hermisson for helpful comments on the manuscript.
I acknowledge support from the DFG Forschergruppe 1078 "Natural
selection in structured populations".

\bibliography{HP}
\bibliographystyle{apalike}

\end{document}